\theoremstyle{plain}	
\newtheorem{thm}{Theorem}[section]
\newtheorem{lemma}[thm]{Lemma}
\newtheorem{prop}[thm]{Proposition}
\newtheorem{cor}[thm]{Corollary}
\theoremstyle{definition}
\newtheorem{defn}[thm]{Definition}
\title{The algebraic geometry of Harper operators}
\author{
        \textsc{Dan Li} \\
        \\
	    Department of Mathematics, Florida State University \\
		Tallahassee, FL 32306 \\
		Email: dli@math.fsu.edu\\
        \\       
        PACS: 73.20.At   \\
        \\
        Key words: Harper operator, ind-pro-variety, density of states\\
	    }
\date{}
\begin{document}

\maketitle


\begin{abstract}

Following an approach developed by Gieseker, Kn\"orrer and Trubowitz for 
discretized Schr\"odinger operators, 
we study the spectral theory of Harper operators in dimension two and one, as a
discretized model of magnetic Laplacians, from the point of view of algebraic
geometry. We describe the geometry of an associated family of Bloch varieties
and compute their density of states. Finally, we also compute some spectral functions based on the density of states.

We discuss the difference between the cases with rational or irrational parameters: 
for the two dimensional Harper operator, the compactification of the Bloch variety is 
an ordinary variety in the rational case and an ind-pro-variety in the irrational case.
This gives rise, at the algebro-geometric level of Bloch varieties, to a phenomenon
similar to the Hofstadter butterfly in the spectral theory. 
In dimension two, the density of states can be expressed in terms of period integrals over Fermi curves, where the
resulting elliptic integrals are independent of the parameters. 

In dimension one, for the almost Mathieu operator, with a similar argument  
we find the usual dependence of the spectral density on the parameter,
which gives rise to the well known Hofstadter butterfly picture.

\end{abstract}

\section{Introduction}

In mathematical physics, the Hamiltonian of a lattice electron in a uniform magnetic field, called the discrete magnetic Laplacian,  has been studied for years in the tight-binding model \cite{Sh}. As a special discrete magnetic Laplacian, the Harper operator corresponds to a square lattice when the coupling constant is fixed (i.e. $\lambda = 1 $) \cite{H}. For instance, the Harper operator arises in the study of the integer quantum Hall effect \cite{BES}. One important spectral property of the Harper operator is that its spectrum is a Cantor set of zero Lebesgue measure for  every irrational frequency \cite{AK}, \cite{L}.  
In this paper we show that the different structure of the spectrum (band or Cantor-like) in
the rational and irrational case has an analog in terms of the algebro-geometric
properties of the Bloch variety, which is an ordinary variety in the rational case and
an ind-pro-variety (with properties analogous to a totally disconnected space) in the 
irrational case.  

More precisely, we consider the algebro-geometric approach to the spectral theory of
electrons in solids developed for discrete periodic Schr{\"o}dinger operators by
Gieseker, Kn\"orrer and Trubowitz in \cite{GKT} and we investigate how to extend
this approach to the case of Harper operators and almost Mathieu operators.
As we will see in detail in \S \ref{HarperAGsec} below, the main obstacle, which
so far prevented people from extending this method to the case with magnetic
field, lies in the fact that, in the irrational case, one leaves the world of algebraic
varieties, due to the simultaneous presence of infinitely many components. 
We will resolve this problem here by describing the relevant geometric space
via a suitable ``limit procedure", by which one can make sense of spaces that
are not themselves algebraic varieties, but that are built up from a sequence
of algebraic varieties. Making sense of this limit procedure requires working in
a setting where operations such as taking limits of geometric objects are possible,
and this is the reason why we have to resort to a moderate use of categorical
tools and methods, which are precisely built in order to allow for notions of limits
of the type we need. 

Our goal in this paper is to obtain explicit expressions for the spectral densities of the
two dimensional Harper operator and of its ``degeneration", the one-dimensional
almost Mathieu operator, in terms of periods on a suitable ``limit of algebraic varieties".
In order to obtain this result, we first need to overcome two main problems of a 
geometric nature:
\begin{enumerate}
\item Take a limit of algebraic varieties;
\item Compactify and remove singularities in a way that is compatible with this limit procedure.
\end{enumerate}

The first problem is the one we already mentioned:  we will see that, in the case of irrational 
parameters, the geometric space that describes the spectral problem for the 
Harper operator is the locus of zeros of a countable family of polynomial equations. 
As such it is not an algebraic variety, but we will show that it can be constructed by taking 
a {\em limit} (in a suitable sense, technically called an ind-variety) 
of a family of algebraic varieties. Roughly speaking, this means that the relevant space
can be built by progressively adding more and more components, so that one can
carry out operations such as period computations essentially ``component-wise", with
certain compatibilities taken into account.  

The second problem is more subtle. In the original formulation of \cite{GKT}, for
the case without magnetic field, one needs to compactify the Bloch variety (which
describes the spectral problem for the discretized periodic Schr{\"o}dinger operator)
by adding a Fermi curve ``at infinity" and to resolve singularities by blowups, in order
to be able to work in a smooth setting, where the period integral giving the spectral
density function is computed.  This problem occurs in exactly the same way when
we extend this method to Harper and almost Mathieu operators, but it is now further
complicated by the fact that we want to be able to perform these operations of
compactification and removal of singularities in a way that is compatible with the
``limit procedure" already mentioned, so that we can ``pass to the limit" and still 
achieve the desired result. 

We will see that this problem forces us to replace a limit with a double limit.
More precisely, on the one hand there is what one refers to as an inductive
limit, namely the operation of assembling together the countably many
components of the Bloch ind-variety. On the other hand there is the procedure
that removes singularities in each component (again, with compatibilities) and
that is achieved by {\em blowup} operations. These create varieties that are
progressively less singular and which project down to the original singular
variety, in such a way as to not change anything where the latter was already
smooth. This other operation defines what is called a projective limit and the
desired compatibility of these two limiting operations that allow us to solve
both problems (1) and (2) at the same time is encoded in what is called 
an ind-pro-variety in the current algebro-geometric terminology. Although
we work in a very special case of this very general kind of construction, it
will be convenient to adopt this general framework, as that assures us that
we can ``pass to the limit" by consistently carrying out calculations in the
intermediate steps, at the level of the successive approximations.

While this geometric construction is appropriately formulated in algebro-geometric
and categorical language where we can appropriately define the necessary double
limit operation, the resulting space we obtain has a simpler heuristic description in 
more directly physical terms. In the original case of \cite{GKT} of
the periodic Schr{\"o}dinger operator without magnetic field, the  
Bloch variety physically describes the complex energy-crystal
momentum dispersion relation, that is, the set of complex points that can be reached
via analytic continuation from the band functions.  In the case of Harper and
almost Mathieu operators with irrational parameters, one observes the phenomenon
that, instead of intervals (band structure), the spectrum takes the form of a Cantor set,
which is geometrically a suitable {\em limit} of an approximating family intervals. 
Correspondingly, the geometric space describing the complex energy-crystal
momentum dispersion relation, which replaces the Bloch variety, is no longer
an algebraic variety, but it is obtained in the same way by ``analytic continuation"
from a Cantor set rather than from bands in the spectrum. It is then natural to 
expect that such a geometric space will be the ``algebro-geometric analog of
a Cantor set", also obtained as a limit, in a way that mirrors the construction of
the Cantor set as a limit of intervals. 
It is then amusing to see that this concrete finding is in complete 
agreement with a general categorical result of Kapranov \cite{KV}, \cite{P}, which shows that
all ind-pro-varieties (which simply means all spaces that are obtained by a
``double limit procedure" of the kind described above) in fact really do look like Cantor sets.

The paper is structured as follows: the geometric problems (1) and (2) above
are dealt with in \S \ref{HarperAGsec1}--\S \ref{HarperAGsec5}. A more physics oriented reader 
who is willing to believe that a suitable double
limit of varieties needed to solve both the first and the second problem can be
constructed, can skip directly to \S \ref{DenSec1} and the following sections,
where the space obtained in
the previous part of the paper is used to perform the needed period computations that
describe the density of states and the spectral functions for the Harper and
almost Mathieu operators.  In particular, \S  \ref{DenSec1} and \S \ref{SpFunc1sec}
deal with the case of the two-dimensional Harper operator, while \S \ref{DenSec2}
deals with the one-dimensional case of the almost Mathieu operator.

\subsection{The algebro-geometric setup}

In \cite{GKT}, Gieseker, Kn\"orrer and Trubowitz modeled the theory of electron propagation in solids by considering a discrete periodic Schr{\"o}dinger operator $-\Delta + V$ acting on the Hilbert space $\ell^2(\mathbb{Z}^2)$, with 
\begin{equation}\label{Delta0}
\Delta \psi(m,n) = \psi(m+1,n) + \psi(m-1,n)  + \psi(m,n+1)  + \psi(m,n-1)
\end{equation}
the random walk operator that discretizes the Laplacian, and 
with an effective potential $V$ given by a real function, periodic with respect to a sublattice $a\mathbb{Z} + b\mathbb{Z}$, where $a$ and $b$ are distinct odd primes. They studied the geometry of the associated Bloch variety

\begin{equation}\label{BVvar}
\begin{array}{rl}
B(V) = & \{ (\xi_1, \xi_2, \lambda) \in   
\mathbb{C}^* \times \mathbb{C}^* \times \mathbb{C}  \,|\,  \\
& \exists \text{ nontrivial }   \psi  \text{ with }  (-\Delta + V) \psi =\lambda \psi,  \\
& \text{such that } \psi(m +a,n) = \xi_1 \psi(m,n) \\
& \text{and } \, \psi(m ,n +b) = \xi_2 \psi(m,n), \,  \forall ~(m,n) \in \mathbb{Z}^2 \}
\end{array}
\end{equation}

One then considers the projection
$\pi : B(V) \rightarrow \mathbb{C}$ on the third coordinate and one 
defines the Fermi curves as $F_\lambda(\mathbb{C}) := \pi^{-1}(\lambda)$.

In more physical terms, the Bloch variety represents the complex energy-crystal
momentum dispersion relation, that is, the locus of all complex points that can
be reached by analytic continuation from any band function.

Defining the integrated density of states $\rho(\lambda)$ as the averaged counting function of the eigenvalues, they observed that the density of states $d\rho/d\lambda$ can be expressed as a period integral over the homology class $ F_\lambda$, namely
$$
\frac{d\rho}{d\lambda} = \int_{F_\lambda} \omega_\lambda
$$
where $\omega_\lambda$ is a holomorphic form on the complex curve $F_\lambda(\mathbb{C})$.

The interested reader is referred to 3D VRML Fermi Surface Database \cite{CNCHS}
for concrete physical examples of Fermi curves.

\subsection{The case with magnetic field: Harper operator}

In this paper we consider a different, but closely related, spectral problem, where instead of the
usual Laplacian and its discretization given by the random walk operator, one considers a 
magnetic Laplacian, whose discretization is a Harper operator. We also restrict our attention to the case with trivial potential $V\equiv 0$.

The (two dimensional) Harper operator $H$ acting on $\ell^2(\mathbb{Z}^2)$ is defined as
\begin{equation}\label{Harper}
\begin{array}{rl}
H \psi (m,n) : = &
                              e^{-2\pi i \alpha n} \psi(m+1,n) +
                              e^{2\pi i \alpha n} \psi(m-1,n)  + \\
                           &    e^{-2\pi i \beta m} \psi(m,n+1)  +
                              e^{2\pi i \beta m} \psi(m,n-1)
\end{array}                            
\end{equation}
where the two unitaries
$$
U\psi(m,n) : =e^{-2\pi i\alpha n} \psi(m+1,n) \ \  \text{ and }  \ \  V\psi(m,n) : =e^{-2\pi
i\beta m} \psi(m,n+1) 
$$ 
are the so-called magnetic translation operators with phases $\alpha$ and $\beta$ respectively, and the group of magnetic translations $T_{\alpha, \beta}$ is generated by $U, V$. 
One can then write the Harper operator as $H= U + U^* + V + V^*$. Note that our form of the Harper operator is slightly different from that in the literature, but they are all unitary equivalent by a gauge transformation 
$T_\gamma \psi(m,n) = e^{2\pi i \gamma mn} \psi(m,n)$ on $\ell^2(\mathbb{Z}^2)$.

Let ${\mathcal T}_{\alpha,\beta}=C^*(T_{\alpha,\beta})$ be the group 
$C^*$-algebra of the group of magnetic translations $T_{\alpha,\beta}$.
Recall that the noncommutative torus $A_\theta$ is the universal C$^*$-algebra generated by two unitaries $u,v$ subject to the commutation relation $uv=e^{2\pi i \theta} vu$. Setting  $\theta =  \alpha - \beta$, we have a representation $\pi_\theta: A_\theta \rightarrow {\mathcal T}_{\alpha, \beta}$ such that $\pi_\theta(u)=U, \pi_\theta(v)=V$. Thus, the Harper operator 
$H $ is the image of a bounded self-adjoint element of $A_\theta$.

\subsection{Rational vs irrational}
If the parameters  $\alpha, \beta$ are rational numbers, so is $\theta$, then the rotation algebra $A_\theta$ is isomorphic to the continuous sections of some vector bundle over the two-torus $\mathbb{T}^2$.  For rational parameter, the one dimensional Harper operator is a periodic operator and its spectrum consists of energy bands separated by gaps by Bloch-Floquet theory. The Bloch variety associated to a Harper operator with rational phases has only finitely many components, and can be treated similarly to the periodic Schr{\"o}dinger operators discussed in \cite{GKT}. 

When $\theta$ is an irrational real number, the irrational rotation algebra $A_\theta$ is a simple $C^*$-algebra and it has been studied in noncommutative geometry motivated by Kronecker foliation, deformation theory etc. For irrational parameter, the spectrum of the Harper operator is a Cantor set of zero measure \cite{H}, \cite{HS}. 

Here, instead of following the approach to the spectral theory of Harper and almost Mathieu
operators based on functional analysis and noncommutative geometry, we aim at adapting
the algebro-geometric setting developed in \cite{GKT} in the case of discretized periodic
Schr\"odinger operators without magnetic field.

As we show in \S \ref{HarperAGsec} below, in the case of irrational parameters,
the analog of the Bloch variety that describes the complex energy-crystal 
momentum dispersion relation is no longer an ordinary algebraic variety, since
it is defined by a countable family of polynomial equations. It can still be described
in such a way that algebro-geometric methods apply, as a ``limit" of algebraic
varieties, an ind-variety that has infinitely many components. This description 
brings in some additional difficulty when we compactify it and blow it up to 
resolve the singularities. 

This second operation requires one more limit operation, which needs to
be compatible with the first one. An algebro-geometric setting where such
double limit operations can be described and the necessary compatibilities
are encoded in the structure is that of ind-pro-varieties, which can be viewed 
as locally compact Hausdorff totally disconnected spaces, and behave in
all ways more like Cantor sets than algebraic varieties (see \cite{KV}, \cite{P}).
Thus, this different behavior reflects at the level of Bloch varieties
the different structure (bands or Cantor sets) of the Hofstadter butterfly spectrum in the
rational and irrational cases. 

We assume that $\alpha, \beta, \theta$ are all irrational real numbers in this paper, which will not
 be stated otherwise.

\subsection{Harper and almost Mathieu operators}

A limit case of the Harper operator is the almost Mathieu operator, whose spectral theory has been widely studied in connection with the famous phenomenon of the Hofstadter butterfly, which was first observed by Hofstadter\cite{H}, also cf. \cite{B}, \cite{L1}.

Indeed, if we set the parameter $\beta = 0$ and let the Harper operator act on $\ell^2(\mathbb{Z})$, we can express the resulting operator in terms of two new unitaries
$$
U' \varphi(n) : =e^{-2 \pi i \alpha n } \psi(n) \ \  \text{ and } \ \ V'\varphi(n): = \varphi(n+1)
$$ 
Thus the almost Mathieu operator is defined as $H' := U' + U'^* + V' + V'^*$, namely
\begin{equation}\label{almostmathieu}
H' \varphi (n) : =  2cos(2\pi \alpha n)\varphi(n) + \varphi(n+1) + \varphi(n-1)
\end{equation}
The physical meaning of this limit process is that one takes the Landau gauge which forces the vector potential only in one direction, so the almost Mathieu operator is sometimes called the Landau Hamiltonian by physicists. 

Therefore, we get another representation $\pi_{\alpha}$ of the noncommutative torus such that $\pi_{\alpha}(u)= U', \pi_{\alpha}(v)=V' $. In the literature, the almost Mathieu operator is also referred to as the (one dimensional) Harper operator. Since the $C^*$-algebra $A_\theta$ is simple for irrational $\theta$,  the almost Mathieu operator has the same spectrum as that of the (two dimensional) Harper operator. 

The spectral property of the Harper operator is related to the famous ``Ten Martini problem'', which was one of Simon's problems \cite{Sim}. Now its spectrum is well understood and the reader can consult the sources listed as follows. Applying semi-classical analysis based on  Wilkinson's renormalization,  Hellfer and Sj\"ostrand \cite{HS} proved that the spectrum of the almost Mathieu operator has a Cantor structure for frequency $\alpha$ having continued fraction expansions with big denominators. Last \cite{L} also showed that it is a zero measure Cantor set for $\alpha$ satisfying a Diophantine condition.

\begin{center}
\includegraphics[scale=6]{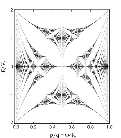}\\
Figure 1: Energy spectrum $ ( -2 \leq E \leq 2) $ as a function of the magnetic flux $\alpha = \Phi/\Phi_0$ where rational numbers $ 0 < p/q < 1$ are used to approximate $\alpha$.
\end{center}

\subsection{Spectral density and parameters}

In the discrete approximation, we study the geometry of a family of Bloch varieties and Fermi curves associated to the spectral theory of the two dimensional and the one dimensional Harper operators, and we compute their density of states. 
For the (two dimensional) Harper operator, we show in \S \ref{DenSec1} that the density of states turns out to be a period independent of the parameter $\theta=  \alpha - \beta$, which involves the complete elliptic integral of the first kind. However, for the almost Mathieu operator (one dimensional case), 
we show by a similar technique in \S \ref{DenSec2} that the density of states explicitly depends on the parameter $\alpha$ in a way that recovers the usual phenomenon of the Hofstadter butterfly, illustrated in the figure. 

Note that, besides the difference in dimension, the generators of the Harper operator have a symmetric form in the variables and parameters, while the almost Mathieu operator is naturally generated by translation and rotation operators. 

Because of the electron-hole symmetry, the density of states of the Harper operator takes the same form for positive and negative energy levels. 
We show in \S \ref{DenSec1} that it can also be represented as a sum of half-periods of two isomorphic elliptic curves. 

When the density of states is available, some spectral functions, such as the zeta function and the partition function, can be obtained by integrations of special functions of the eigenvalues. 

\section{Harper operator and algebraic varieties}\label{HarperAGsec}

In this section, we first describe the geometry of a family of Bloch varieties and algebraic Fermi curves associated to the Harper operator in the discrete model.  Our discussion here follows closely the setting and notations used in \cite{GKT} and we will point out explicitly where and how our case differs from the case without magnetic field considered there. 

While the rational case is essentially like the original case of \cite{GKT}, in the case
of irrational parameters new phenomena arise, which reflect the different structure 
of the spectrum (Cantor sets instead of bands).

We first show that the analog of the Bloch variety, which describes the 
complex energy-crystal momentum dispersion relation, is no longer 
defined by a finite set of polynomial equations but by a countable family
of such equations. Thus, it is no longer an algebraic variety, but we
can still describe it as an inductive limit of algebraic varieties (an ind-variety)
by a procedure that essentially amounts to working with an approximating
family of algebraic varieties obtained by considering only finitely many
components at a time, with compatibility conditions.

We then deal with the problem of the compactification and singularities.
Using similar embedding and blowup techniques as in \S 5 and \S 7 of \cite{GKT}, we describe the compactification of our Bloch ind-variety.  This requires performing the
necessary operations compatibly on the successive approximations of the
ind-variety. The resulting system of operations can be described as a double
limit procedure and it is encoded in a compatible system of embeddings and
projections. This is a special case of a general procedure devised precisely
to the purpose of making sense of such geometric double limits, which is
called an ind-pro-variety, and which in our case gives the algebro-geometric
counterpart of the Cantor sets in the spectrum, in the form of one of these ind-pro-objects
that, although obtained from families of algebraic varieties, behave themselves
like Cantor sets.

We carry out this construction in the subsections \S \ref{HarperAGsec1}--\S \ref{HarperAGsec5}.
We then focus, in the later subsections starting with \S \ref{DenSec1},
on the computation of the density of states and results based on that, such as the spectral functions, using period integrals computed on the spaces constructed in \S \ref{HarperAGsec1}--\S \ref{HarperAGsec5}.

\subsection{A family of Bloch varieties}\label{HarperAGsec1}

As in \cite{GKT}, one defines the Bloch variety associated to the Harper operator \eqref{Harper} as
\begin{equation}\label{BHarp}
\begin{array}{rll} B := & \{ (\xi_1, \xi_2, \lambda)  \in   
\mathbb{C}^* \times \mathbb{C}^* \times \mathbb{C}
 \, | \, H \psi =\lambda \psi, \\  & \psi(m +a,n) =
\xi_1 \psi(m,n), \, \psi(m ,n +b) = \xi_2 \psi(m,n) \}
\end{array}
\end{equation}
This looks very similar to the original case \eqref{BVvar} with trivial potential $V\equiv 0$.
However, in fact, the locus determined by \eqref{BHarp} differs significantly from the case
without magnetic field, since %
in the irrational case it defines a {\em countable} collection of algebraic varieties.

To see this, we first describe the spectral problem defining $B$ 
as a countable collection of matrices acting on the vector $\{\psi(m,n)\} \in \mathbb{C}^{ab}$,
when $\alpha$ and $\beta$ are irrational. This becomes a finite family when both $\alpha$ 
and $\beta$ are rational.

The correct notion of ``limit of algebraic varieties" that we need to employ here to describe
the geometric properties of the resulting space is that of an {\em  ind-variety}. We 
first recall the abstract definition and then we give a more heuristic description of its meaning,
before we apply it to our concrete and specific problem.

\begin{defn} 
An {\em ind-variety } over $\mathbb{C}$ is a set $X$ together with a filtration 
$$
X^0 \subseteq X^1 \subseteq X^2 \subseteq \cdots
$$
such that $\cup_{n \geq 0}X^n = X$ and each $X^n$ is a finite dimensional algebraic variety with the inclusion $ X^n \hookrightarrow X^{n+1}$ being a closed embedding. Such ind-variety $X$ will also be denoted by $ \varinjlim X^n$. An ind-variety $X = \varinjlim X^n $ is said to be {\em affine} (resp. {\em projective}) if each $X^n$ is affine (resp. projective). 
\end{defn}

In categorical terms, an ind-variety $X$ is a formal filtered colimit of an inductive system $ \{ X^n \}$ of varieties. The notion of an ind-variety was first introduced by Shafarevich and one can find more examples and properties of ind-varieties for instance in \cite{K}. 

Let $ X $ and $ Y $ be two ind-varieties with filtrations $ \{ X^n \}$ and $ \{ Y^n \}$. A map $ f: X \rightarrow Y$ is a morphism if and only if for every $ n \geq 0 $, there exists a number $ m(n) \geq 0$ such that  $f|_{X^n}: X^n \rightarrow Y^{m(n)}$ is a morphism between varieties. Thus we get the Ind category Ind$(Var) $ of the category of varieties.

\smallskip
Clearly, the idea here is that one ``builds up" the limit space by a sequence of 
step-by-step operations that progressively enlarge an algebraic variety. The limit
space is no longer a variety itself, but it can be approximated by varieties and 
one can carry over algebro-geometric operations to the limit, as long as they
are performed on the approximating varieties in a way that is compatible with
the successive inclusions that build up the limit space. For example (as will be
directly relevant to our case) the limit may be obtained by adding more and
more components. The abstract definition recalled above is much more
general and allows for more complicated cases where, for example, the
dimensions of the successive approximations grow, so that the limit may also
be infinite dimensional. Our case is milder, as the dimensions of the
components remain bounded and only their number is growing. However,
it is convenient for us to describe the result as an ind-variety as that
encodes all the compatibility conditions between the successive approximations
that we will need to respect when dealing with the compactification and
singularities problem. 
\smallskip

We now proceed to see what all this means concretely, in our specific case.

\begin{lemma}\label{BandMkl}
The Bloch variety \eqref{BHarp} defined by the spectral problem of the Harper
operator \eqref{Harper} is an affine ind-variety that can be written as $B =\cup_{k,\ell \in \mathbb{Z}} B_{k,\ell}$, where
\begin{equation}\label{Bkl}
B_{k,\ell} =  \{ (\xi_1,\xi_2,\lambda)\in {\mathbb C}^*\times {\mathbb C}^* \times
{\mathbb C}  \,|\, P_{k,\ell}(\xi_1,\xi_2,\lambda)=0 \},
\end{equation}
for polynomials
\begin{equation}\label{Pkl}
P_{k,\ell}(\xi_1,\xi_2,\lambda)= \det( M^{(k,\ell)}-\lambda I ),
\end{equation}
where the $ab$ by $ab$ matrix $M^{(k,\ell)}$ has entries
\begin{equation}\label{Mkl}
 \begin{array}{ll}
      0 & \textrm{if $ m'=m,n' = n$}\\
      e^{-2\pi i\alpha (n+\ell b)} & \textrm{if $ m'=m +1,n' = n$}\\
      e^{2 \pi i\alpha (n+\ell b)} & \textrm{if $ m'=m -1,n' = n $}\\
      e^{-2\pi i\beta (m+k a)} & \textrm{if $ m'=m ,n'= n +1$}\\
      e^{2\pi i\beta (m+k a)} & \textrm{if $ m'=m ,n'= n-1$}\\
     - \xi_1 & \textrm{if $ m =1,  m'= a,n' = n $}\\
     - \xi_1^{-1} & \textrm{if $ m =a,  m'= 1,n' = n$}\\
     - \xi_2 & \textrm{if $ m = m',n= 1, n' = b$}\\
     - \xi_2^{-1} & \textrm{if $ m = m',n = b, n' =1$}\\
      0 & \textrm{otherwise}
\end{array}
\end{equation}
\end{lemma}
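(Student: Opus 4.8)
The plan is to convert the infinite spectral problem \eqref{BHarp} into a countable family of finite determinantal conditions and then to recognize the union of the resulting hypersurfaces as an affine ind-variety. First I would use the quasi-periodicity relations $\psi(m+a,n)=\xi_1\psi(m,n)$ and $\psi(m,n+b)=\xi_2\psi(m,n)$ to observe that any such $\psi$ is completely determined by its restriction $v\in\mathbb{C}^{ab}$ to a fundamental domain $\{1\le m\le a,\ 1\le n\le b\}$, the value on a translate being recovered as $\psi(m+ka,n+\ell b)=\xi_1^k\xi_2^\ell\, v(m,n)$. The crucial departure from \cite{GKT} enters here: the coefficients $e^{\pm 2\pi i\alpha n},e^{\pm 2\pi i\beta m}$ of $H$ in \eqref{Harper} are \emph{not} invariant under the sublattice $a\mathbb{Z}\oplus b\mathbb{Z}$ when $\alpha,\beta$ are irrational — equivalently, the ordinary translations by $a$ and $b$ do not commute with $H$ — so the eigenvalue equation does not collapse to a single condition on $v$ as it does in the periodic case.

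Next I would write $H\psi=\lambda\psi$ at the sites of the translated domain indexed by $(k,\ell)$. Substituting the quasi-periodicity relations and dividing through by the common factor $\xi_1^k\xi_2^\ell$ turns this into the linear system $(M^{(k,\ell)}-\lambda I)\,v=0$: the bulk hopping terms produce the phase entries $e^{\pm 2\pi i\alpha(n+\ell b)}$ and $e^{\pm 2\pi i\beta(m+ka)}$, while the sites on the boundary of the domain wrap around through the quasi-periodicity and contribute the entries involving $\xi_1^{\pm1},\xi_2^{\pm1}$, yielding precisely the matrix \eqref{Mkl}. The existence of a nontrivial $v$ for the strip $(k,\ell)$ is then $\det(M^{(k,\ell)}-\lambda I)=0$, i.e. \eqref{Pkl}, cutting out $B_{k,\ell}$. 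Carrying out this bookkeeping carefully — tracking how each boundary site contributes a single power of $\xi_i^{\pm1}$ with the correct sign, and checking that the phases are evaluated at $n+\ell b$ and $m+ka$ — is the main computational step and the precise point where the magnetic case diverges from \cite{GKT}.

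It then remains to assemble these into an ind-variety. Because $\alpha,\beta$ are irrational, the phases $e^{2\pi i\alpha\ell b}$ and $e^{2\pi i\beta ka}$ are all distinct, so the matrices $M^{(k,\ell)}$, and hence the hypersurfaces $B_{k,\ell}$, form a genuinely countable family of distinct components; in fact a diagonal (gauge) conjugation shows each $B_{k,\ell}$ is a copy of $B_{0,0}$ carried by an automorphism $(\xi_1,\xi_2,\lambda)\mapsto(e^{i\phi_\ell}\xi_1,\,e^{i\chi_k}\xi_2,\,\lambda)$ of $\mathbb{C}^*\times\mathbb{C}^*\times\mathbb{C}$, with $\ell$- and $k$-dependent phases, so no finite subfamily exhausts the collection. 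Each $B_{k,\ell}$ is a closed hypersurface in the affine variety $\mathbb{C}^*\times\mathbb{C}^*\times\mathbb{C}$ and hence affine. To exhibit the filtration I would fix an enumeration $(k_0,\ell_0),(k_1,\ell_1),\dots$ of $\mathbb{Z}^2$ and set $X^N=\bigcup_{j=0}^N B_{k_j,\ell_j}$: each $X^N$ is a finite union of affine hypersurfaces, hence an affine variety, the inclusions $X^N\hookrightarrow X^{N+1}$ are closed embeddings, and $\bigcup_N X^N=B$, so that $B=\varinjlim X^N$ is an affine ind-variety.

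I expect the main obstacle to be conceptual rather than computational. Tracing the quasi-periodicity through a single translated domain and reading off \eqref{Mkl} is routine; the delicate point is the passage from the lone spectral problem \eqref{BHarp} to the \emph{whole} family $\{B_{k,\ell}\}$ and the justification of the union description. Here one must be careful that a global quasi-periodic $\psi$ couples all strips to the \emph{same} $v$, so that the naive reading of \eqref{BHarp} is more restrictive than each individual $B_{k,\ell}$; the correct object reflecting the spectrum of $H$ in the irrational, non-periodic case is the union of the strip-wise problems, which is exactly the ind-variety $B=\bigcup_{k,\ell}B_{k,\ell}$ and which carries the Cantor-like structure anticipated in the introduction. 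Making this identification precise — rather than merely verifying the finite-union filtration — is where the argument needs the most care.
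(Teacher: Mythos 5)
Your proposal follows essentially the same route as the paper's proof: restrict to the fundamental domain $1\le m\le a$, $1\le n\le b$, observe that for irrational $\alpha,\beta$ the non-periodic phases $e^{\pm 2\pi i\alpha(n+\ell b)}$, $e^{\pm 2\pi i\beta(m+ka)}$ yield a $(k,\ell)$-indexed family of $ab\times ab$ determinantal problems (with the $\xi_i^{\pm 1}$ entries coming from the wrap-around boundary conditions, as in \S 2 of \cite{GKT}) rather than a single one, and take the union of the resulting affine hypersurfaces as an ind-variety. The paper's proof is terser --- it does not spell out the explicit filtration by finite unions or the interpretive point you flag at the end (that \eqref{BHarp} is to be read as the family of strip-wise problems rather than one system coupled through a common $v$) --- but these additions only make explicit what the paper leaves implicit, so the two arguments coincide in substance.
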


\proof Consider the spectral problem $H \psi (m,n) =\lambda \psi (m,n)$.
Since we work with the boundary conditions $\psi(m +a,n) =
\xi_1 \psi(m,n)$ and  $\psi(m ,n +b) = \xi_2 \psi(m,n)$, we can consider the
range where $m=1,\ldots, a$ and $n=1,\ldots, b$. If $\alpha$ and $\beta$ are
irrational numbers, then the phase factors $\exp(2\pi i \alpha n)$ and 
$\exp(2\pi i \beta m)$ are not periodic. This means that, for each 
$(m,n)$ in the chosen fundamental domain of the $a{\mathbb Z} \oplus b{\mathbb Z}$ action,
we have a collection of problems, parameterized by the choice of an
element $(k,\ell)$ in ${\mathbb Z}^2$, which differ only in the presence of the phase factors
$\exp(2\pi i \alpha (n+\ell b))$ and $\exp(2\pi i \beta (m+k a))$. In the case where $\alpha$
is rational, the phase factors $\exp(2\pi i \alpha (n+\ell b))$ repeat periodically, with only
finitely many distinct values, and so for $\exp(2\pi i \beta (m+k a))$, when $\beta$
is rational. Thus, in the case where both $\alpha$ and $\beta$ are rational, 
there are only finitely many different varieties $B_{k,\ell}$ to consider, so their union is 
a genuine algebraic variety. While in
the case where at least one of the two parameters is irrational there are infinitely
many components, so their union $B$ is an ind-variety. For each pair $(k,\ell)$ then one can write the corresponding problem in the form given by the matrix \eqref{Mkl}, by arguing as 
in \S 2 of \cite{GKT}. In addition, by the form of the defining matrix, $B_{k,\ell}$ is symmetric under the involution on each fiber, $ (\xi_1, \xi_2, \lambda) \mapsto (\xi_1^{-1}, \xi_2^{-1}, \lambda)$. 

\endproof

The results of this subsection take care of the first of the two geometric
problems mentioned in the introduction. In the following subsections \S 
\ref{HarperAGsec2}--\S \ref{HarperAGsec5} we deal with the second problem:
compactification and singularities. We show that we can resolve it in a way
that is compatible with the limit procedure described in this subsection.
This will be the more technical algebro-geometric part of the paper, and
it can be skipped by the readers who wish to see
directly the derivation of the density of states and spectral functions in
terms of periods, and who are willing to assume, without going through a
more detailed explanation, that we can indeed make 
sense of the double limit procedure needed to obtain the correct geometric 
space on which these period computations take place. 

\subsection{Fourier modes description}\label{HarperAGsec2}

In this subsection we just rewrite our varieties with a convenient change of
coordinates, coming from Fourier transform, which will be useful later, when
we deal with the compactification and singularities problem. 

\smallskip

Let the Fourier transform of $\psi \in \ell^2(\mathbb{Z}^2)$ be
$\tilde{\psi} \in L^2(\mathbb{R}^2/\mathbb{Z}^2)$, namely
$$
\tilde{\psi}(k_1, k_2) = \sum_{(m,n) \in \mathbb{Z}^2} \psi(m,n)
e^{2 \pi i(mk_1 + nk_2)} = \sum_{(m,n) \in \mathbb{Z}^2} a_{mn}z_1^mz_2^n 
$$
where $z_1 :=e^{2 \pi ik_1 }, z_2 :=e^{2 \pi ik_2 } $ and $a_{mn} := \psi(m,n)$.
Furthermore, the Fourier transform of $H\psi$ is
\begin{equation} \label{FTH}
 \begin{array}{ll}
\widetilde{H \psi}(k_1,k_2) =  & e^{-2 \pi ik_1}\tilde{\psi}(k_1,k_2-\alpha) 
                               + e^{2 \pi ik_1}\tilde{\psi}(k_1,k_2+\alpha) \\
                              &+ e^{-2\pi ik_2}\tilde{\psi}(k_1 -\beta ,k_2) 
                               +e^{2 \pi ik_2}\tilde{\psi}(k_1 + \beta,k_2) 
  \end{array}
\end{equation}
i.e.
\begin{equation}\label{Mode}
\widetilde{H \psi}(z_1,z_2)  =  \sum_{(m,n) \in \mathbb{Z}^2} (e^{ -2 \pi i\alpha n} z_1^{-1} + e^{ 2 \pi i\alpha n} z_1    + e^{ -2 \pi i\beta m} z_2^{-1} + e^{ 2 \pi i\beta m} z_2) a_{mn}z_1^mz_2^n 
\end{equation}
Thus the spectrum of $H$ is given by the Fourier modes in the $k$-space:
\begin{equation*}
\lambda = e^{ -2 \pi i\alpha n} z_1^{-1} + e^{ 2 \pi i\alpha n} z_1    + e^{ -2 \pi i\beta m} z_2^{-1} + e^{ 2 \pi i\beta m} z_2 
\end{equation*}
If we consider functions in $\ell^2(\mathbb{Z}^2/a\mathbb{Z}+b\mathbb{Z})$, then by introducing $(k,\ell) \in \mathbb{Z}^2$ 
\begin{equation}\label{Eigenvalue}
\lambda = e^{ -2 \pi i\alpha (n+\ell b)} z_1^{-1} + e^{ 2 \pi i\alpha (n+\ell b)} z_1    + e^{ -2 \pi i\beta (m+ka)} z_2^{-1} + e^{ 2 \pi i\beta (m+ka)} z_2 
\end{equation}

We then proceed, for a fixed $(k,\ell)$, as in \S 2 of \cite{GKT}. We
introduce the unramified covering 
$$
 \begin{array}{l}
   c:\mathbb{C}^* \times \mathbb{C}^* \times  \mathbb{C} \rightarrow \mathbb{C}^* \times \mathbb{C}^*                 \times  \mathbb{C} \\
     ~~~~~~ (z_1, z_2, \lambda) ~~\mapsto ~~~ (z_1^a, z_2^b, \lambda)
  \end{array}
$$
and the preimage
$\tilde{B}_{k,\ell} : = c^{-1}(B_{k,\ell})$. So the
covering $c:\tilde{B}_{k,\ell} \rightarrow B_{k,\ell}$ has the structure group $\mu_a \times \mu_b$, where by $\mu_n$ we mean the group of roots of unity of order $n$, with the
action of $\rho \in \mu_a \times \mu_b$ on the fibers of the form
$\rho \cdot (z_1, z_2, \lambda) = (\rho_1 z_1, \rho_2 z_2, \lambda)$. We write
$\tilde{P} = P \circ c$.

\smallskip

We now show how the discrete Fourier transform description of the Bloch varieties
given in \S 2 of \cite{GKT} is affected by the presence,  in the Harper operator, of the 
phase factors $\exp(2\pi i \alpha)$ and $\exp(2\pi i \beta)$ and their powers. 

\smallskip

For $(m,n) \in \mathbb{Z}^2$, $\{ z_1^mz_2^n  \}$ consists of a basis for the functions in $L^2(\mathbb{T}^2)$. There is an obvious action of  $ \mu_a \times \mu_b$ on this basis $\rho\cdot z_1^mz_2^n :=\rho_1^m\rho_2^nz_1^mz_2^n $, which is basically a change of base, and can be naturally extended to an action on $L^2(\mathbb{T}^2)$. 

Let $\rho$ act on $\widetilde{H \psi}(z_1,z_2)$. It not only changes the basis from $\{ z_1^mz_2^n  \}$ to $\{ \rho_1^m\rho_2^nz_1^mz_2^n \} $, but it also changes the Fourier modes into 
\begin{equation*}\label{NewEv}
e^{ -2 \pi i\alpha(n+\ell b)} \rho_1^{-1} z_1^{-1} + e^{ 2 \pi i\alpha (n+\ell b)}  \rho_1 z_1    + e^{ -2 \pi i\beta (m+ka)}  \rho_2^{-1} z_2^{-1} + e^{ 2 \pi i\beta (m+ka)}  \rho_2 z_2 
\end{equation*}
Fix $(\rho_{01}, \rho_{02}) = (e^{ 2 \pi i/a} , e^{ 2 \pi i /b})$, other roots of $\mu_a$ and  $\mu_b$ can be written   as $ (\rho_1, \rho_2) = (\rho_{01}^p, \rho_{02}^q) $ for some integers $  1 \leq p \leq a  $ and $ 1 \leq q \leq b $, then rewrite the Fourier modes as
\begin{equation*}
 \rho_{01}^{-\alpha(n+\ell b) a -p} z_1^{-1} + \rho_{01}^{\alpha (n+\ell b) a +p} z_1 + \rho_{02}^{-\beta (m+ka) b - q} z_2^{-1} + \rho_{02}^{\beta (m+ka) b +q} z_2
 \end{equation*}

\begin{lemma}\label{Hrho}
The Harper operator \eqref{Harper} determines a family of operators $H^{(k,\ell)}$,
for $(k,\ell)\in {\mathbb Z}^2$, which acts as multiplication by the complex number
\begin{equation}\label{rhoetaz}
 \rho_{01}^{\alpha (n+\ell b) a +p} z_1 + \rho_{01}^{-\alpha(n+\ell b) a -p} z_1^{-1} + \rho_{02}^{\beta (m+ka) b +q} z_2 + \rho_{02}^{-\beta (m+ka) b - q} z_2^{-1} 
\end{equation}
\end{lemma}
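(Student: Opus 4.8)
The plan is to recognize that Lemma~\ref{Hrho} is essentially a bookkeeping statement that packages the computations already carried out in the paragraphs preceding it, so the proof reduces to a careful tracking of exponents rather than to any new idea. First I would take as starting point the Fourier-transformed operator \eqref{Mode}: on the monomial basis $\{z_1^m z_2^n\}$ of $L^2(\mathbb{T}^2)$ the operator $H$ acts diagonally, multiplying the mode $a_{mn}z_1^m z_2^n$ by the symbol displayed in \eqref{Mode}. Restricting to functions in $\ell^2(\mathbb{Z}^2/a\mathbb{Z}+b\mathbb{Z})$ and labelling the copies of the fundamental domain by $(k,\ell)\in\mathbb{Z}^2$ produces the $(k,\ell)$-dependent symbol \eqref{Eigenvalue}; this is exactly the data defining the family $H^{(k,\ell)}$, each member acting as multiplication by a complex number on the corresponding mode.

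The heart of the matter is then to rewrite this symbol after pulling back along the cover $c$ and applying the $\mu_a\times\mu_b$ action. I would apply $\rho=(\rho_1,\rho_2)$ exactly as described before the statement: it sends $z_1^{\pm1}\mapsto\rho_1^{\pm1}z_1^{\pm1}$ and $z_2^{\pm1}\mapsto\rho_2^{\pm1}z_2^{\pm1}$, turning \eqref{Eigenvalue} into the intermediate four-term expression already displayed just above the lemma. The key step is now to absorb both the transcendental phase factors and the root-of-unity factors into single powers of $\rho_{01}=e^{2\pi i/a}$ and $\rho_{02}=e^{2\pi i/b}$. For this I would record the identities
\begin{equation*}
e^{\pm 2\pi i\alpha(n+\ell b)}=\rho_{01}^{\pm\,\alpha(n+\ell b)a},\qquad e^{\pm 2\pi i\beta(m+ka)}=\rho_{02}^{\pm\,\beta(m+ka)b},
\end{equation*}
which hold directly from $\rho_{01}=e^{2\pi i/a}$ and $\rho_{02}=e^{2\pi i/b}$. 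Substituting $\rho_1=\rho_{01}^{p}$ and $\rho_2=\rho_{02}^{q}$ with $1\le p\le a$, $1\le q\le b$, and using that powers of $\rho_{01}$ (respectively $\rho_{02}$) add their exponents, the two factors multiplying $z_1$ combine into $\rho_{01}^{\alpha(n+\ell b)a+p}$, those multiplying $z_1^{-1}$ into $\rho_{01}^{-\alpha(n+\ell b)a-p}$, and likewise for the $z_2^{\pm1}$ terms. Collecting and reordering the four terms yields precisely \eqref{rhoetaz}.

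The step I expect to require the most care is not any algebraic manipulation but the \emph{interpretation} of the non-integer powers: since $\alpha,\beta$ are irrational, exponents such as $\alpha(n+\ell b)a+p$ are not integers, so $\rho_{01}^{\alpha(n+\ell b)a+p}$ is not literally an $a$-th root of unity. I would therefore make explicit that the symbol $\rho_{01}^{x}$ is defined throughout by the single-valued rule $\rho_{01}^{x}:=e^{2\pi i x/a}$ (and $\rho_{02}^{y}:=e^{2\pi i y/b}$) for arbitrary real $x,y$; with this convention the exponent arithmetic is unambiguous, it is consistent simultaneously with the integer powers $\rho_{01}^{p}$ and with the transcendental phase factors, and the displayed identities are immediate. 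A useful consistency check that I would note at the end is structural rather than computational: in \eqref{rhoetaz} the coefficient of $z_1$ is the inverse of the coefficient of $z_1^{-1}$, and similarly for $z_2$, which is exactly the symmetry forced by the form $H=U+U^{*}+V+V^{*}$ of the Harper operator in terms of the magnetic translations, and it confirms that the rewriting has preserved the expected shape of the symbol.
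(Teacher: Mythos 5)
Your exponent bookkeeping is correct, and it reproduces the computation the paper carries out \emph{before} the lemma: acting with $\rho=(\rho_{01}^{p},\rho_{02}^{q})$ on the Fourier modes of \eqref{Eigenvalue} and absorbing the phases via $e^{\pm 2\pi i\alpha(n+\ell b)}=\rho_{01}^{\pm\alpha(n+\ell b)a}$ and $e^{\pm 2\pi i\beta(m+ka)}=\rho_{02}^{\pm\beta(m+ka)b}$ does yield \eqref{rhoetaz}, and your explicit single-valued convention $\rho_{01}^{x}:=e^{2\pi i x/a}$ for the irrational exponents is a point the paper leaves implicit and is worth stating. However, there is a genuine gap: the paper's actual proof of this lemma is not that computation at all, but the treatment of the Floquet boundary conditions, which your proposal never addresses. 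The spectral problem \eqref{BHarp} is constrained by $\psi(m+a,n)=\xi_1\psi(m,n)$ and $\psi(m,n+b)=\xi_2\psi(m,n)$, and a monomial $z_1^{m}z_2^{n}$ with arbitrary $(z_1,z_2)\in\mathbb{C}^{*}\times\mathbb{C}^{*}$ is not an admissible eigenfunction of that constrained problem. The paper Fourier-transforms one boundary condition to obtain $\tilde{\psi}(z_1,z_2)=z_1^{a}\,\xi_1\,\tilde{\psi}(z_1,z_2)$, concludes that the conditions are removed on the covering space precisely when $\xi_1=z_1^{-a}$ and $\xi_2=z_2^{-b}$, and then invokes the involution symmetry $(\xi_1,\xi_2,\lambda)\mapsto(\xi_1^{-1},\xi_2^{-1},\lambda)$ from Lemma \ref{BandMkl} to normalize to $\xi_1=z_1^{a}$, $\xi_2=z_2^{b}$.

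Without this step your argument does not establish that $H^{(k,\ell)}$ acts as a \emph{pure} multiplication operator on the cover, i.e.\ that no residual constraint ties $(z_1,z_2)$ to the Floquet multipliers $(\xi_1,\xi_2)$; and it is exactly this identification that allows Corollary \ref{tildeBkl} to describe the lifted components $\tilde{B}_{k,\ell,\varrho}$ as the zero loci of $\hat M^{(k,\ell)}_{m,n}(\rho,z)-\lambda$. Your appeal to restricting to functions in $\ell^2(\mathbb{Z}^2/a\mathbb{Z}+b\mathbb{Z})$ only covers the periodic case $\xi_1=\xi_2=1$; for general multipliers it is the passage to the covering $c(z_1,z_2,\lambda)=(z_1^{a},z_2^{b},\lambda)$ together with the substitution $\xi_1=z_1^{a}$, $\xi_2=z_2^{b}$ that converts the twisted eigenvalue problem into the untwisted diagonal one. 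Adding that short argument (a few lines, as in the paper) would close the gap; the rest of your write-up, including the consistency check on the inverse pairing of the coefficients of $z_i^{\pm 1}$, can stand as is.
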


\begin{proof}
In addition to the above discussion, we also have to take care of the boundary conditions. Let us look at one of them 
$ \psi(m +a,n) = \xi_1 \psi(m,n)$. Taking the Fourier transform on both sides gives $e^{-2\pi i ak_1} \tilde{\psi}(k_1,k_2) = \xi_1 \tilde{\psi}(k_1,k_2)$, or equivalently $\tilde{\psi}(z_1,z_2) = z_1^a \xi_1 \tilde{\psi}(z_1,z_2)$. 
So the boundary conditions are removed naturally in the covering space $ \tilde{B}_{k,\ell} $ if we set $\xi_1 = z_1^{-a}$ and  $\xi_2 = z_2^{-b}$. In fact, take the involution symmetry $ (\xi_1, \xi_2, \lambda) \mapsto (\xi_1^{-1}, \xi_2^{-1}, \lambda)$ into account, we can just set $\xi_1 = z_1^a$ and  $\xi_2 = z_2^b$. 
\end{proof}

We write $\hat M^{(k,\ell)}$ for the diagonal $ab\times ab$ matrix with entries
\begin{equation}\label{hatMkl}
\hat M^{(k,\ell)}_{m,n}(\rho,z):=  \rho_{01}^{\alpha (n+\ell b) a +p} z_1 + \rho_{01}^{-\alpha(n+\ell b) a -p} z_1^{-1} + \rho_{02}^{\beta (m+ka) b +q} z_2 + \rho_{02}^{-\beta (m+ka) b - q} z_2^{-1} 
\end{equation}
Thus for pairs $ (k,\ell) $,  $\tilde{B}_{k,\ell}$ is the zero-set
of $\tilde{P}_{k,\ell}(z_1,z_2, \lambda) = det(\hat{M}^{(k,\ell)} -
\lambda I)$, 
or equivalently it is given by the zero locus
\begin{equation}
\tilde{B}_{k,\ell} = \{ (z_1,z_2, \lambda)
 \,| \, \prod_{m,n, \rho } (\hat M^{(k,\ell)}_{m,n}(\rho,z) -\lambda)=0 \}.
\end{equation}
For brevity, hereafter we denote $\varrho := ((m,n), (\rho_1, \rho_2)) \in \mathbb{Z}_a \times  \mathbb{Z}_b \times \mu_a \times \mu_b $.

\begin{cor}\label{tildeBkl}
The lifted Bloch ind-variety $\tilde{B} = \bigcup_{(k,\ell) \in \mathbb{Z}^2} \tilde{B}_{k,\ell} = \bigcup_{k,\ell, \varrho} \tilde{B}_{k,\ell, \varrho}$ is a reduced affine ind-variety 
and its components are nonsingular subvarieties
\begin{equation}\label{tildeBklrho}
\tilde{B}_{k,\ell,\varrho} := \{ (z_1, z_2, \lambda) \in \mathbb{C}^* \times \mathbb{C}^* \times  \mathbb{C}~ | ~ \hat M^{(k,\ell)}_{m,n}(\rho,z) -
\lambda = 0 \}
\end{equation}
where $(m,n) \in \mathbb{Z}_a \times  \mathbb{Z}_b$ and $ (\rho_1, \rho_2)\in \mu_a \times \mu_b $.
\end{cor}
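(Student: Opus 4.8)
The plan is to deduce all three assertions from the single fact, established by the diagonalization in Lemma~\ref{Hrho}, that the characteristic polynomial of the diagonal matrix $\hat M^{(k,\ell)}$ splits completely into factors linear in $\lambda$,
\[
\tilde P_{k,\ell}(z_1,z_2,\lambda)=\det\bigl(\hat M^{(k,\ell)}-\lambda I\bigr)=\prod_{\varrho}\bigl(\hat M^{(k,\ell)}_{m,n}(\rho,z)-\lambda\bigr),
\]
so that, set-theoretically, $\tilde B_{k,\ell}=\bigcup_{\varrho}\tilde B_{k,\ell,\varrho}$ with each $\tilde B_{k,\ell,\varrho}$ the zero locus of a single factor. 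I would then handle nonsingularity of the components, reducedness, and the ind-variety structure separately.

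For nonsingularity, I would observe that the equation $\hat M^{(k,\ell)}_{m,n}(\rho,z)-\lambda=0$ simply expresses $\lambda$ as the explicit Laurent polynomial \eqref{hatMkl} in $(z_1,z_2)$. Hence $\tilde B_{k,\ell,\varrho}$ is the graph of a morphism $\mathbb C^*\times\mathbb C^*\to\mathbb C$, and projection onto the first two coordinates is an isomorphism $\tilde B_{k,\ell,\varrho}\cong\mathbb C^*\times\mathbb C^*$; in particular the component is smooth and irreducible of dimension two. Equivalently, the differential of the defining function has $\partial/\partial\lambda=-1\ne0$ at every point of $\mathbb C^*\times\mathbb C^*\times\mathbb C$, so the Jacobian criterion gives smoothness directly. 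This step is routine.

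For reducedness I would show that, for each fixed $(k,\ell)$, the linear factors in the product above are pairwise distinct, so that $\tilde P_{k,\ell}$ is squarefree. By \eqref{hatMkl} the coefficient of $z_1$ in the factor indexed by $\varrho=((m,n),(\rho_{01}^{p},\rho_{02}^{q}))$ equals $e^{2\pi i(\alpha(n+\ell b)+p/a)}$ and the coefficient of $z_2$ equals $e^{2\pi i(\beta(m+ka)+q/b)}$. Two factors share the same $z_1$-coefficient precisely when $\alpha(n-n')+(p-p')/a\equiv0\pmod1$; since $\alpha$ is irrational and $n-n'$ is bounded, this forces $n=n'$ and $p\equiv p'\pmod a$, and symmetrically the $z_2$-coefficients coincide only when $m=m'$ and $q\equiv q'\pmod b$, using the irrationality of $\beta$. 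Hence distinct $\varrho$ give distinct factors, $\tilde P_{k,\ell}$ is radical, and each $\tilde B_{k,\ell}$ is reduced. I expect this to be the main obstacle, as it is the only place where the irrationality hypothesis is genuinely needed; it is precisely what prevents the collapse of components that would occur in the rational case.

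Finally, for the ind-variety structure I would enumerate the countable index set $\{(k,\ell)\}\subseteq\mathbb Z^2$ as a sequence and set $\tilde B^{N}$ to be the union of the finitely many components $\tilde B_{k,\ell,\varrho}$ coming from the first $N$ pairs. Each $\tilde B_{k,\ell,\varrho}$ is closed in $\mathbb C^*\times\mathbb C^*\times\mathbb C$, so every finite union $\tilde B^{N}$ is a closed affine subvariety, the inclusions $\tilde B^{N}\hookrightarrow\tilde B^{N+1}$ are closed embeddings, and $\tilde B=\varinjlim_N\tilde B^{N}$ is an affine ind-variety in the sense of the definition above; affineness is consistent with the covering description as well, since $c$ is finite and each $B_{k,\ell}$ is affine, so $\tilde B_{k,\ell}=c^{-1}(B_{k,\ell})$ is affine. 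Since the intersection of radical ideals is radical, each $\tilde B^{N}$ remains reduced, and collecting the three steps yields the statement.
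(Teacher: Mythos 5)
Your proposal is correct and follows essentially the same route as the paper, which states this corollary without a separate proof, treating it as immediate from the diagonalization $\tilde P_{k,\ell}(z_1,z_2,\lambda)=\prod_{\varrho}\bigl(\hat M^{(k,\ell)}_{m,n}(\rho,z)-\lambda\bigr)$: each linear factor cuts out a graph over $\mathbb{C}^*\times\mathbb{C}^*$ (hence smooth and irreducible), and the countable family is assembled into an affine ind-variety exactly as you do, your squarefreeness computation via irrationality being the justification of reducedness that the paper leaves implicit. Note only that the identical computation with $n+\ell b$ in place of $n$ (and $m+ka$ in place of $m$) also shows that factors coming from \emph{different} pairs $(k,\ell)$ are pairwise distinct, which completes the component-distinctness claim for the whole union $\tilde B$ rather than just for each fixed $\tilde B_{k,\ell}$.
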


Obviously we have a group action of $\mu_a \times \mu_b$
on each $\tilde{B}_{k,\ell,\varrho}$ by acting on each fiber $\rho \cdot (z_1, z_2, \lambda) = (\rho_1 z_1, \rho_2 z_2, \lambda)$, in other words the group action changes the powers $p, q$ in $\hat M^{(k,\ell)}_{m,n}(\rho,z)$ where $\rho = (\rho_1, \rho_2) = (\rho_{01}^p, \rho_{02}^q)$.

\subsection{Singularity locus}\label{HarperAGsec3}

In this subsection we describe explicitly the singularities of the components
of the Bloch ind-varieties that we need to deal with.

For fixed $(k,\ell)$ and $\varrho$, consider a typical fiber $E_\lambda = E_\lambda(k, \ell, \varrho)$ of $\tilde{B}_{k,\ell,\varrho}$ given by the set 
\begin{equation}
 \{ (z_1, z_2) | \lambda = \rho_{01}^{\alpha (n+\ell b) a +p} z_1 + \rho_{01}^{-\alpha(n+\ell b) a -p} z_1^{-1} + \rho_{02}^{\beta (m+ka) b +q} z_2 + \rho_{02}^{-\beta (m+ka) b - q} z_2^{-1} 
\}
\end{equation}
and take the derivatives formally
\begin{equation}
 \begin{array}{ll}
\frac{\partial E_\lambda}{\partial
z_1} = \rho_{01}^{\alpha (n+\ell b) a +p} - \rho_{01}^{-\alpha (n+\ell b) a -p}z_1^{-2}\\
 \frac{\partial E_\lambda}{\partial
z_2} = \rho_{02}^{\beta (m+ka) b +q} - \rho_{02}^{-\beta (m+ka) b - q} z_2^{-2}
 \end{array}
\end{equation}
where we use the same notation $ E_\lambda $ for the variety and for the polynomial
$$
 E_\lambda = \rho_{01}^{\alpha (n+\ell b) a +p} z_1 + \rho_{01}^{-\alpha(n+\ell b) a -p} z_1^{-1} + \rho_{02}^{\beta (m+ka) b +q} z_2 + \rho_{02}^{-\beta (m+ka) b - q} z_2^{-1} - \lambda
$$
Then the singular points  consist of four points $( \pm \rho_{01}^{-\alpha (n+\ell b) a -p} , \pm \rho_{02}^{-\beta (m+ka) b - q})$.

We have an analog of Lemma 5.1 of \cite{GKT}. When $\lambda = 0$, $E_\lambda$ splits into two components
\begin{equation}
 \begin{array}{ll}
 \{ (z_1, z_2) |  \rho_{01}^{\alpha (n+\ell b) a +p} z_1 + \rho_{02}^{\beta (m+ka) b +q} z_2 = 0\}\\
  \{ (z_1, z_2) |\rho_{01}^{\alpha (n+\ell b) a +p} z_1 + \rho_{02}^{-\beta (m+ka) b - q} z_2^{-1}  = 0 \}
  \end{array}
\end{equation}

When $\lambda = 4 $ , $E_\lambda$ is irreducible with 
$( \rho_{01}^{-\alpha (n+\ell b) a -p} , \rho_{02}^{-\beta (m+ka) b - q})$ as the
only singular point. 

When $\lambda = -4$ , $E_\lambda$ is irreducible with 
$( - \rho_{01}^{-\alpha (n+\ell b) a -p} , - \rho_{02}^{-\beta (m+ka) b - q})$ as its singular point. 

Otherwise, when $\lambda \in \mathbb{C} \setminus \{0, \pm 4\}$, the typical fiber is a nonsingular complex curve.

\subsection{Fermi curves}\label{HarperAGsec4}

In this subsection we identify the components of the Fermi curve associated to
the Bloch ind-variety. It is these countably many components of the Fermi
curve that will contribute to the period computation of \S \ref{DenSec1}, after
taking care of the compactification and singularities problem.

Recall that affine Fermi curves are defined by $F_\lambda(\mathbb{C}) :=
\pi^{-1}(\lambda)$ with the projection $\pi: B \rightarrow
\mathbb{C}$; $(\xi_1, \xi_2,\lambda) \mapsto \lambda$. Since 
$B = \tilde{B} / \mu_a \times \mu_b$, $F_\lambda(\mathbb{C})$ is given by 
\begin{equation}\label{FC}
 \bigcup_{k,\ell, m,n}\{ (\xi_1, \xi_2) | \lambda =  e^{ 2 \pi i\alpha (n+\ell b)} \xi_1  + e^{ -2 \pi i\alpha (n+\ell b)} \xi_1^{-1} +  e^{ 2 \pi i\beta (m+ka)} \xi_2 + e^{ -2 \pi i\beta (m+ka)} \xi_2^{-1}  \}
\end{equation}
where $ 1 \leq m \leq a, 1 \leq n \leq b $ and ${(k,\ell)}$ running through $\mathbb{Z}^2$, so we have countably many components $ F_\lambda^{k,\ell,m,n} $ for each $\lambda$. Then the Fermi curve itself is, in this case, an ind-variety. Namely, for fixed integers $ (k, \ell) $ and $ (m,n) $,  the component $F_\lambda^{k,\ell,m,n}$ is given by
\begin{equation}
 \{ (\xi_1, \xi_2) | \lambda =  e^{ 2 \pi i\alpha (n+\ell b)} \xi_1  + e^{ -2 \pi i\alpha (n+\ell b)} \xi_1^{-1} +  e^{ 2 \pi i\beta (m+ka)} \xi_2 + e^{ -2 \pi i\beta (m+ka)} \xi_2^{-1}  \}
\end{equation}
and the singular locus
of $F_\lambda^{k,\ell,m,n}$ is easily derived from that of $E_\lambda$. 

When $\lambda = 0$, $F_\lambda^{k,\ell,m,n}$ has two components
\begin{equation}
 \begin{array}{ll}
 \{ (\xi_1, \xi_2) |  e^{ 2 \pi i\alpha (n+\ell b)} \xi_1 +  e^{ 2 \pi i\beta (m+ka)} \xi_2  = 0 \}\\
  \{ (\xi_1, \xi_2) | e^{ 2 \pi i\alpha (n+\ell b)} \xi_1 + e^{ -2 \pi i\beta (m+ka)} \xi_2^{-1} = 0  \}
 \end{array}
\end{equation}

When $\lambda =  4 $, $F_\lambda^{k,\ell,m,n}$ is singular only at  $
(e^{ -2 \pi i\alpha (n+\ell b)}, e^{ -2 \pi i\beta (m+ka)})$, similarly when $\lambda =  -4 $, $F_\lambda^{k,\ell,m,n}$ is  singular only at  $(- e^{ -2 \pi i\alpha (n+\ell b)}, - e^{ -2 \pi i\beta (m+ka)})$ and they are irreducible curves.

\subsection{Compactification and blowups}\label{HarperAGsec5}

In this more technical subsection we explicitly describe the form of the
compactification of the components of the Bloch ind-variety and the
compatibility of this compactification operation with the limit procedure
described in \S \ref{HarperAGsec1} above. We also describe the
blowup procedures that take care of the singularities problem and
again check that these can be carried out compatibly with the
operation of passing to the limit. Performing both of these operations 
will create a more complicated ``double limit" procedure, which can be
appropriately described, with all the compatibility conditions directly
encoded, by the notion of an ind-pro system of varieties. The resulting
double limit obtained in this way is called an ind-pro-variety and is then
the geometric space that we need to deal with, which describes the
complex energy-crystal momentum dispersion relation in the case
of the Harper and almost Mathieu operators with irrational parameters.

We proceed as in $\S 4$ of \cite{GKT} to obtain the compactifications of the components of our Bloch ind-variety. We use the same notation and terminology as in $\S 4$ of \cite{GKT}. 

By adding points at infinity, first we have $\tilde{B}_1$ as the projective closure  of
$\tilde{B}$ in $\mathbb{P}^1 \times \mathbb{P}^1
\times \mathbb{P}^1$. Let $ s, t $ belong to the set $\{ 0, \infty \}$, it is easy to see that the intersection  $\tilde{B}_1 \bigcap (\mathbb{P}^1 \times \mathbb{P}^1 \times
\mathbb{P}^1 \setminus  \mathbb{C}^* \times
\mathbb{C}^* \times \mathbb{C})$ consists of eight rational curves $  \{ s \}
 \times \mathbb{P}^1 \times \{ \infty \}$, $  \mathbb{P}^1
\times \{ t \} \times \{ \infty \}$, $  \{ s \} \times \{ t \} \times \mathbb{P}^1$ as in Lemma 4.1 of \cite{GKT}.

$\tilde{B}_1$ is singular at the points $O_{s,t}:= ( s, t, \infty) $, after
blowing up these four points, we define $\tilde{B}_2$ as the strict transform of $\tilde{B}_1$ and the group action of  $\mu_a \times \mu_b$ can  be lifted onto  $\tilde{B}_2$ naturally.

Let $P^{s,t}$ be the exceptional divisor over $ O_{s,t}$, the intersection points of $P^{s,t}$  with the strict transforms of eight rational curves are denoted by  $w_0^{s,t} \in P^{s,t} \cap \overline{\{ s \} \times \{ t \} \times  \mathbb{P}^1 
\setminus  O_{s,t}}$,  $w_1^{s,t} \in P^{s,t} \cap \overline{\{ s \} \times   \mathbb{P}^1  \times \{ \infty \} \setminus  O_{s,t}}$ and $w_2^{s,t} \in P^{s,t} \cap \overline{ \mathbb{P}^1  \times \{ t \} \times \{ \infty \} \setminus   O_{s,t}}$. Obviously $w_0^{s,t}$, $w_1^{s,t}$, $w_2^{s,t}$ are fixed points of $\mu_a \times \mu_b$. 

We then have a direct analog of Lemma 4.2 of \cite{GKT} adapted to our inductive system of varieties. The main difference lies in the fact that here we deal with a countable family of quadrics arising from the blowups instead of having just $ ab$ of them as in the original case of \cite{GKT}.

\begin{lemma}\label{lem}
 There exist quadrics
$Q_{k, \ell, \varrho}^{s,t} \subset P^{s,t}$ containing $w_0^{s,t}$, $w_1^{s,t}$, $w_2^{s,t}$ satisfying that there are neighborhoods $U_i$ of $w_i^{s,t}$ such that $U_i \cap \tilde{B}_2$ consists of countably many branches.
\end{lemma}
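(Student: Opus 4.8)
The plan is to follow the strategy of Lemma 4.2 of \cite{GKT}, reducing the statement to a local tangent-cone computation at each of the four singular points $O_{s,t}$, carried out simultaneously on every component $\tilde{B}_{k,\ell,\varrho}$ of the ind-variety. Since blowing up a point in the smooth threefold $\mathbb{P}^1\times\mathbb{P}^1\times\mathbb{P}^1$ replaces $O_{s,t}$ by an exceptional $\mathbb{P}^2 = P^{s,t}$, and the strict transform of each surface $\tilde{B}_{k,\ell,\varrho}$ meets $P^{s,t}$ precisely in the projectivized tangent cone of $\tilde{B}_{k,\ell,\varrho}$ at $O_{s,t}$, the quadrics $Q^{s,t}_{k,\ell,\varrho}$ will be produced directly as these tangent cones, once we check that they are genuinely quadratic and that they pass through the three marked points.

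First I would write the defining equation of a single component in homogeneous coordinates. Abbreviating $A := \rho_{01}^{\alpha(n+\ell b)a+p}$ and $C := \rho_{02}^{\beta(m+ka)b+q}$, the equation $A z_1 + A^{-1}z_1^{-1} + C z_2 + C^{-1}z_2^{-1} = \lambda$ clears denominators to a $(2,2,1)$ hypersurface. At $O_{\infty,\infty}$ I pass to the affine chart with local coordinates $u_1 = z_1^{-1}$, $u_2 = z_2^{-1}$, $v = \lambda^{-1}$; multiplying by $u_1 u_2 v$ yields
\begin{equation*}
A u_2 v + C u_1 v - u_1 u_2 + A^{-1}u_1^2 u_2 v + C^{-1}u_1 u_2^2 v = 0,
\end{equation*}
whose lowest-order part is the quadratic form $A u_2 v + C u_1 v - u_1 u_2$. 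Its projectivization in $P^{\infty,\infty}\cong\mathbb{P}^2$ with coordinates $[u_1:u_2:v]$ is the desired conic $Q^{\infty,\infty}_{k,\ell,\varrho}$. The three marked points are the coordinate vertices $w_0^{s,t}=[0:0:1]$, $w_1^{s,t}=[0:1:0]$, $w_2^{s,t}=[1:0:0]$, the projectivized tangent directions of the $\lambda$-, $z_2$- and $z_1$-curves through $O_{s,t}$; substituting each vertex into the quadratic form gives $0$, so $Q^{s,t}_{k,\ell,\varrho}$ contains all three. The remaining singular points are handled identically: replacing $z_i^{-1}$ by $z_i$ wherever $s$ or $t$ equals $0$ only sends $A\mapsto A^{-1}$ or $C\mapsto C^{-1}$ in the quadratic form, leaving its vanishing at the three vertices unchanged.

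It remains to read off the local branch structure. In the chart $v=1$ the conic becomes $C u_1 + A u_2 - u_1 u_2 = 0$, whose gradient at $w_0^{s,t}$ is $(C,A)\neq 0$; the analogous check at $w_1^{s,t}$ and $w_2^{s,t}$ shows each $Q^{s,t}_{k,\ell,\varrho}$ is smooth there, so the strict transform of $\tilde{B}_{k,\ell,\varrho}$ contributes exactly one local sheet of $\tilde{B}_2$ in a neighborhood $U_i$ of $w_i^{s,t}$, with trace $Q^{s,t}_{k,\ell,\varrho}$ on $P^{s,t}$. Distinct components have distinct coefficient pairs $(A,C)$ and hence distinct conics, so these sheets are pairwise distinct branches; since the index set $\{(k,\ell,\varrho)\}$ is countable for irrational $\alpha,\beta$ (Lemma \ref{BandMkl}), $U_i\cap\tilde{B}_2$ is a countable union of branches, as claimed.

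The main obstacle is not the tangent-cone computation, which is a routine adaptation of \cite{GKT}, but controlling these countably many branches within the ind-variety framework. Note that $A = e^{2\pi i\alpha(n+\ell b)}\rho_{01}^{p}$, so as $\ell$ ranges over $\mathbb{Z}$ the tangent directions of the conics at $w_i^{s,t}$ are \emph{dense} on the circle of directions, since $\alpha b$ is irrational; the branches through $w_i^{s,t}$ therefore accumulate, which is exactly the feature absent from the finite (only $ab$ branches) situation of \cite{GKT}. Consequently the statement must be read at the level of the ind-variety $\tilde{B}_2 = \varinjlim$ of its finite subunions: at each finite stage one has only finitely many quadrics and an honest strict transform, and the countable union of accumulating branches appears only in the colimit. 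The delicate point in writing the full proof is thus to verify that enlarging the range of $(k,\ell,\varrho)$ adjoins new quadrics through the common points $w_0^{s,t}, w_1^{s,t}, w_2^{s,t}$ without disturbing the branches already present, so that the blowups assemble into a directed system compatible with the inductive filtration of \S\ref{HarperAGsec1}; it is precisely this accumulation of branches that produces the Cantor-like, ind-pro behavior anticipated in the introduction.
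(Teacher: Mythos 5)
Your proposal is correct, and the production of the quadrics is essentially the paper's: both pass to local coordinates at a corner $O_{s,t}$, clear denominators, and identify $Q^{s,t}_{k,\ell,\varrho}$ as the projectivization of the quadratic lowest-order part of the local equation (the paper computes at $(0,0,\infty)$, where the coefficients appear as $A^{-1}, C^{-1}$, while you work at the $(\infty,\infty)$ corner -- an immaterial difference), and the membership of $w_0^{s,t}, w_1^{s,t}, w_2^{s,t}$ is checked by the same substitution. Where you genuinely diverge is the branch count at $w_1^{s,t}$ and $w_2^{s,t}$. You argue component-by-component: the form $Au_2v + Cu_1v - u_1u_2$ is nondegenerate (determinant $-AC/4 \neq 0$), so $O_{s,t}$ is an ordinary double point of each $\tilde{B}_{k,\ell,\varrho}$, each strict transform is one smooth sheet through $w_i^{s,t}$, and irrationality forces distinct $(A,C)$, hence countably many pairwise distinct branches; this is a clean and sufficient proof of the lemma as stated, and arguably shorter than the paper's. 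The paper instead analyzes the union of the sheets: at $w_1^{00}$ it finds the singular line $L' = \{u_1 = u_2 = 0\}$, observes that all components sharing the same $(\ell, n, p)$ have the \emph{same} tangent plane $\{\rho_{01}^{-\alpha(n+\ell b)a - p}u_2 - u_1 = 0\}$ along $L'$ -- so the branches there come in countable groups of mutually \emph{tangent}, not transverse, sheets -- then blows up $L'$ and exhibits the countably many singular lines $L_{\ell,n,p}$ and $L_{k,m,q}$, reading off the branches from the resulting tangent cones. This extra work is not wasted: those lines $L_{k,\ell,\varrho}$ are exactly what gets blown up, in stages $|k| \leq i$, $|\ell| \leq i$, in the Proposition of \S \ref{HarperAGsec5} to construct $\tilde{B}_3$, the varieties $A_i$, and the ind-pro-variety $\varprojlim A_i$, and the tangency grouping is the geometric reason those further blowups are needed at all. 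So if you keep your shorter argument you must still carry out the $L'$-computation separately before the compactification step; your closing remarks on the density of the tangent directions and on reading the statement through the inductive filtration are a correct diagnosis of the same issue and match how the paper organizes the blowups, though the paper itself handles the compatibility in the subsequent Proposition rather than inside this lemma.
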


\begin{proof}
We only consider what happens around the point $ ( 0, 0, \infty)$, other cases can be treated similarly. 

Using the coordinate system $z_1, z_2, \mu= \lambda^{-1} $ in a neighborhood of  $  ( 0, 0, \infty) $, fix $s=t=0$, then $w_0^{00} = (0,0,1), w_1^{00}= (0,1,0), w_2^{00}= (1,0,0)$.

Recall that  the closure of the lifted Bloch variety $\tilde{B}_1$ is the zero-set of the
determinant of the matrices  $\mu z_1 z_2  (M^{(k,\ell)}_{m,n}(\rho,z) - \lambda I)$ with diagonal
entries
\begin{equation}
\rho_{01}^{\alpha (n+\ell b) a +p}\mu z_1^2 z_2 + \rho_{01}^{-\alpha(n+\ell b) a -p}\mu z_2  +  \rho_{02}^{\beta (m+ka) b +q}\mu z_1 z_2 ^2   + \rho_{02}^{-\beta (m+ka) b - q}\mu z_1  -  z_1 z_2 
\end{equation}

Define 
\begin{equation}
 Q_{k,\ell,\varrho}: = \{ (z_1: z_2: \mu) \in P^{00} ~|~ \rho_{01}^{-\alpha(n+\ell b) a -p}\mu z_2 + \rho_{02}^{-\beta (m+ka) b - q}\mu z_1- z_1 z_2  =0 \}
\end{equation}
therefore $P^{00} \cap \tilde{B}_2$ is the union of infinitely many quadrics $Q_{k,\ell,\varrho}$.  In addition,  $ \tilde{B}_2 $ is nonsingular at every point not being a point of intersection of different quadrics. 

Let us look at $w_0^{00}$ first, from the tangent part $ \rho_{01}^{-\alpha(n+\ell b) a -p} z_2 + \rho_{02}^{-\beta (m+ka) b - q}z_1$, there are infinitely many transversal branches labeled by $(k,\ell), (m,n) $ and $(p,q)$ intersecting at $L :=  \{ z_1 =z_2 =0 \}$.

Then consider $w_1^{00}$, since $w_1^{00}= (0,1,0)$ in the coordinates $z_1 , z_2, \mu $,  introduce $\nu \neq 0, u_1,u_2 $ such that
$$
  z_1 = u_1 \nu, ~~ z_2 = \nu,~~\mu = u_2 \nu
$$
And now $\tilde{B}_2$ is given by the
determinant of the infinite matrix with diagonal entries
\begin{equation}
\rho_{01}^{\alpha (n+\ell b) a +p} u_1^2 u_2 \nu^2 + \rho_{01}^{-\alpha(n+\ell b) a -p} u_2  + \rho_{02}^{\beta (m+ka) b +q} u_1 u_2 \nu^2   + \rho_{02}^{-\beta (m+ka) b - q} u_1 u_2  - u_1
\end{equation}

So we get a singular line $L':= \{ u_1 = u_2 = 0\}$ and the tangent cone of
$\tilde{B}_2$ at $L'$ is the union of infinitely many planes $\{ \rho_{01}^{-\alpha(n+\ell b) a -p} u_2 - u_1 = 0\}$. 

Blow up the line $L'$ by defining new coordinates
$$
u_1 = v_1 v_2, ~~ u_2 = v_2
$$
 The strict transform of $\tilde{B}_2$ then is given
by the determinant of the infinite matrix with diagonal entries
\begin{equation}
\rho_{01}^{\alpha (n+\ell b) a +p} v_1^2 v_2^2 \nu^2 +  \rho_{01}^{-\alpha(n+\ell b) a -p}   + \rho_{02}^{\beta (m+ka) b +q} v_1 v_2 \nu^2   + \rho_{02}^{-\beta (m+ka) b - q} v_1 v_2  - v_1 
\end{equation}
In other words,
\begin{equation} 
  \begin{array}{ll}
\rho_{01}^{-\alpha(n+\ell b) a -p} - v_1  + \rho_{02}^{-\beta (m+ka) b - q}\rho_{01}^{-\alpha(n+\ell b) a -p}  v_2 \\
- \rho_{02}^{-\beta (m+ka) b - q}(\rho_{01}^{-\alpha(n+\ell b) a -p} - v_1) v_2 + \text{higher corrections}
  \end{array}
\end{equation}
We get countably many singular lines $L_{\ell, n, p} : = \{ v_1 = \rho_{01}^{-\alpha(n+\ell b) a -p}, v_2 = 0 \}$ and the tangent cone at each $L_{\ell, n,p} $ is $(\rho_{01}^{-\alpha(n+\ell b) a -p} - v_1 ) + \rho_{02}^{-\beta (m+ka) b - q}\rho_{01}^{-\alpha(n+\ell b) a -p}  v_2$, i.e. it has countably many branches as claimed. 

Finally consider $w_2^{00}= (1,0,0)$, analyze similarly to $ w_1^{00}$ and the strict transform $\tilde{B}_2$ is determined by entries
\begin{equation} 
  \begin{array}{ll}
 \rho_{02}^{-\beta (m+ka) b - q}- v_1  + \rho_{01}^{-\alpha(n+\ell b) a -p} \rho_{02}^{-\beta (m+ka) b - q}  v_2 \\
- \rho_{01}^{-\alpha(n+\ell b) a -p}(\rho_{02}^{-\beta (m+ka) b - q} - v_1) v_2 + \text{higher corrections}
  \end{array}
\end{equation}
Countably many singular lines $L_{k, m, q} : = \{ v_1 = \rho_{02}^{-\beta(m+ka) b -q}, v_2 = 0 \}$
are obtained in this case as well. Let $L_{k, \ell,\varrho}$ denote these singular lines $L_{\ell, n, p} $ and $L_{k, m, q}$ together.

\end{proof}

By Bezout's theorem $Q_{k, \ell, \varrho}^{s,t}$ and
$Q_{k', \ell', \varrho'}^{s,t}$ have a point of intersection
$\tilde{d}^{s,t} $ other than $w_0^{s,t}$,
$w_1^{s,t}$, $w_2^{s,t}$ if and only if $(\ell,n,p) \neq (\ell',n',p')$ and
$(k,m,q) \neq (k',m',q')$. If such $\tilde{d}^{s,t} $ does exist, then it can be proved that they are ordinary double points of $\tilde{B}_2$ as in \cite{GKT}.

As in the above lemma \eqref{lem}, we blow up the
strict transforms of the rational curves $ \{ s \} \times
\{ t \} \times \mathbb{P}^1 $, $ \{ s \} \times \mathbb{P}^1\times \{ \infty \} $ and
$  \mathbb{P}^1 \times \{ t \}  \times \{ \infty \}$, and let
$\tilde{B}_3$ be the strict transform of
$\tilde{B}_2$. Then $\tilde{B}_3$ is nonsingular on the strict transforms of the rational curves
$ \{ s \} \times \{ t \}  \times \mathbb{P}^1$, since all branches are transversal to each other in this case. On the other hand,  it is singular
at countably many lines which lie over the strict transforms of $  \{ e \} \times \mathbb{P}^1 \times \{\infty \}$ and respectively of $  \mathbb{P}^1 \times \{ f \} \times \{ \infty \}$. We now take care of these singular lines $L_{k, \ell,\varrho}$. 

\smallskip

In order to do this, we first need to clarify how we are going to proceed at performing
the ``double limit" that will result when we keep adding more and more components
to the Bloch variety (constructing the ind-variety) and at the same time blowing up
all these singular lines, in a compatible way. The double limit procedure we need
is encoded in the notion of an ind-pro-system of algebraic varieties. Again, we first
recall the general abstract definition of this procedure, then we give a more
heuristic interpretation of its meaning, and then we apply it to our concrete case to
obtain explicitly the construction of the compactification and resolution of the Bloch variety
in the irrational case.

\begin{defn}
An ind-pro system of varieties $\{ X_i^n \} $ is a double indexed set of varieties such that for each $i \geq 0$, $\{ X_i^n \} $ is an inductive system while for each $ n \geq 0 $, it is a projective system and every square of the system is Cartesian, i.e. the horizontal maps are injections and the vertical ones are surjections, 
the diagram commutes and
that the top-left corner is the fibered product of the bottom
and right map.
$$
\xymatrix{
\ar @{} [dr] |{}
X^m_i \ar@{>>}[d]^{\pi_{ij}^m}  \ar@{^{(}->}[r] & X^n_i \ar@{>>}[d]^{\pi_{ij}^n} \\
X^m_j \ar@{^{(}->}[r]        & X^n_j
}
$$
\end{defn}

Using categorical dual notions, we define a pro-object of the category Ind$(Var)$ as a formal cofiltered limit of a projective system of ind-varieties. Further, we can construct the category ProInd$(Var)$, or Pro$_{\aleph_0} $Ind$_{\aleph_0}(Var)$ if the index sets are countable.

Given an ind-pro system of varieties $\{ X_i^n \} $, taking the inductive limit and the projective limit gives the ind-pro-variety $ \varprojlim_i \varinjlim_n X^n_i $, which is an object in Pro$_{\aleph_0} $Ind$_{\aleph_0}(Var)$.

\medskip

What this abstract definition is saying in more heuristic terms is that we are building
up a space by performing simultaneously two kinds of operations on a collection of
algebraic varieties. One type of operation is the one we have already seen when
discussing the ind-limit, namely we organize our varieties in a sequence of inclusions
that progressively adds more and more components. The other operation takes care of
the blowups that progressively remove singularities (these form a projective system,
since the smoother blowups map down by projections to the more 
singular varieties they are obtained from).  
The problem here lies in the fact that the required sequence of blowups needs 
to be performed on an infinite number of components that intersect each other, 
and the way to make this possible is by {\em compatibly} carrying out blowups  
on the approximating varieties of the inductive system. Notice that, to 
make sense of this double limit, we only need to use the very simple categorical notions
of inductive and projective limit. In fact, the categorical definition
above is simply encoding the compatibility condition that makes this kind of
double limit operation possible: it is saying that operations we perform
at the level of the approximating algebraic varieties $X^n_i$, as long as they
are done in a way that is compatible with the maps between the varieties that define
the two limit operations, will {\em carry over to the limit}, even though the
space we obtain as the double limit of this sequence of algebraic varieties
is no longer an algebraic variety but a Cantor-like geometry, which cannot
be directly described within classical algebraic geometry. 

\smallskip

We now return to our specific case of the Bloch varieties for the Harper operator in the
irrational case. 

\begin{prop}
In the case of irrational parameters,  the compactification of the lifted Bloch
ind-variety, is an ind-pro-object defined by a chain of iterated blowups on the
ind-variety $\tilde B$.
\end{prop}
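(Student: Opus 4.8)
The plan is to realize the compactification of $\tilde{B}$ explicitly as the double limit $\varprojlim_i \varinjlim_n X_i^n$ of an ind-pro system in which the inductive index $n$ records how many components of $\tilde{B}$ have been assembled, while the projective index $i$ records how far the tower of blowups of \S\ref{HarperAGsec5} has been carried out. First I would fix an enumeration of the countable index set $\{(k,\ell,\varrho)\}$ and let $\tilde{B}^{\langle n \rangle}$ denote the union of the first $n$ components $\tilde{B}_{k,\ell,\varrho}$. By Corollary \ref{tildeBkl} each $\tilde{B}^{\langle n \rangle}$ is a genuine reduced finite-dimensional affine variety and $\tilde{B} = \varinjlim_n \tilde{B}^{\langle n \rangle}$, so the inductive (horizontal) direction is already in place, with the maps $\tilde{B}^{\langle m \rangle} \hookrightarrow \tilde{B}^{\langle n \rangle}$ for $m \le n$ the closed embeddings of the ind-variety.

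Next I would construct the projective (vertical) direction by performing, on each fixed finite family $\tilde{B}^{\langle n \rangle}$, exactly the chain of operations of \S\ref{HarperAGsec5}: take the projective closure in $\mathbb{P}^1 \times \mathbb{P}^1 \times \mathbb{P}^1$, blow up the four points $O_{s,t}$, then blow up the strict transforms of the eight rational curves, and finally blow up the singular lines $L_{k,\ell,\varrho}$ and the ordinary double points $\tilde{d}^{s,t}$ supplied by Lemma \ref{lem}. Since the family is finite this terminates after finitely many steps in a smooth projective variety, and the blowup projections assemble the intermediate stages into a projective tower $\cdots \twoheadrightarrow X_i^n \twoheadrightarrow \cdots \twoheadrightarrow X_0^n$, indexed so that $X_i^n$ is the $i$-th stage of resolution of the first $n$ components and $X_0^n$ is the bare projective closure of those components.

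The decisive step is to check that the squares
$$
\xymatrix{
X^m_i \ar@{>>}[d] \ar@{^{(}->}[r] & X^n_i \ar@{>>}[d] \\
X^m_j \ar@{^{(}->}[r]             & X^n_j
}
$$
are Cartesian, that is, that enlarging the family of components commutes with resolving the singularities. What makes this work is that every blowup center appearing in \S\ref{HarperAGsec5} --- the points $O_{s,t}$, the eight rational curves, and the lines $L_{k,\ell,\varrho}$ --- is cut out inside the \emph{fixed} ambient space $\mathbb{P}^1 \times \mathbb{P}^1 \times \mathbb{P}^1$ and its iterated blowups, independently of how many components have been included. One may therefore perform the blowups in the ambient space and recover each $X_i^n$ as a strict transform; it then remains to verify that the strict transform of a union of components equals the union of the strict transforms, so that restricting to the first $m$ components commutes with the blowup projections and the top-left corner of each square is the fibered product of the bottom and right maps.

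The main obstacle is precisely this compatibility of strict transforms with restriction to subfamilies. The subtlety is that passing from $\tilde{B}^{\langle m \rangle}$ to $\tilde{B}^{\langle n \rangle}$ creates \emph{new} singularities: by Bezout's theorem the quadrics $Q_{k,\ell,\varrho}^{s,t}$ and $Q_{k',\ell',\varrho'}^{s,t}$ meet in additional double points $\tilde{d}^{s,t}$ whenever the multi-indices differ, so the set of centers genuinely grows with $n$. I would therefore need to confirm that blowing up the larger set of centers and then intersecting with the smaller family reproduces exactly the resolution of the smaller family, and that no center belonging to the smaller family is disturbed by the extra blowups. Granting this, the collection $\{X_i^n\}$ is an ind-pro system in the sense of the definition above, and its double limit $\varprojlim_i \varinjlim_n X_i^n$ is the desired ind-pro-object, realized by the chain of iterated blowups on the ind-variety $\tilde{B}$.
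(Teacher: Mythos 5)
Your organization of the double limit inverts the paper's, and as written it contains a genuine gap at precisely the point you flag --- and the gap is not merely an unverified step, it is a step that fails. The paper never truncates the set of components: each pro-stage $A_i$ is defined as the strict transform of the \emph{full} ind-variety $\tilde{B}_3$ under blowing up only the finitely many singular lines $L_{k,\ell,\varrho}$ with $|k|\le i$, $|\ell|\le i$ (there are $2(2i+1)ab$ of them), the ind-structure of each $A_i$ being inherited from that of $\tilde{B}_3$; the pro-index exhausts blowup \emph{centers}, not resolution depth of a truncated family, and the surjections $\pi_{ij}\colon A_i \twoheadrightarrow A_j$ together with $A=\varprojlim_i A_i$ give the ind-pro-object. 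Crucially, the paper's centers are only the four points $O_{s,t}$ and the eight rational curves (both fixed in the ambient $\mathbb{P}^1\times\mathbb{P}^1\times\mathbb{P}^1$, independent of which components are present) and the lines $L_{k,\ell,\varrho}$, which are pairwise disjoint by irrationality of the parameters; the double points $\tilde{d}^{s,t}$ are \emph{never} blown up --- Theorem \ref{FermitildeB} explicitly records that they survive as ordinary double points of $\bar{\tilde{B}}$. Your proposal instead resolves each truncated family $\tilde{B}^{\langle n\rangle}$ all the way to smoothness, including blowing up the $\tilde{d}^{s,t}$, and this is exactly what breaks the Cartesian squares: $\tilde{d}^{s,t}$ arises as the intersection of quadrics belonging to two \emph{distinct} components, so if one of the pair enters the family only between stages $m$ and $n$, then $X^n_i$ blows up a point lying on a smooth surface branch that is already present in $X^m_i$. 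Blowing up a point on a surface is never an isomorphism on the strict transform (it inserts an exceptional curve), so restricting $X^n_i$ to the first $m$ components does not reproduce $X^m_i$, and the top-left corner of your square is not the fibered product. Contrast this with the line centers: the strict transform of a smooth surface under blowup along a curve that is Cartier on it \emph{is} an isomorphism, so including a line $L_{k,\ell,\varrho}$ before all the branches through it are present does no harm; point centers that depend on pairs of components are the one kind your scheme cannot accommodate.

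There are two ways to repair your argument. Either drop the double-point blowups and accept the $\tilde{d}^{s,t}$ as ordinary double points of the compactification, as the paper does --- then every center is either component-independent ($O_{s,t}$ and the rational curves, blown up at every truncation level alike) or a line along which strict transform commutes with restriction to subfamilies, your identity ``strict transform of a union equals union of strict transforms'' applies, and the Cartesian property becomes checkable; or abandon truncation in the pro-direction altogether and follow the paper, exhausting the line centers over the boxes $|k|,|\ell|\le i$ on the full ind-variety. You should also correct your claim that every center is ``cut out inside the fixed ambient space, independently of how many components have been included'': your own next paragraph contradicts it for the $\tilde{d}^{s,t}$, and it is literally false for the lines $L_{k,\ell,\varrho}$ as well, whose equations involve the component indices $(k,\ell,\varrho)$ --- for the lines the dependence happens to be harmless, for the double points it is fatal.
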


\proof
For fixed $(k,\ell), (m,n) $ and $(p,q)$, we can blow up the line $L_{\ell, n, p} $ or $L_{k, m, q} $ as usual. Let us look at $L_{\ell, n, p} $ here. By introducing new coordinates $w_1, w_2$ such that 
\begin{equation}
 v_1 - \rho_{01}^{-\alpha(n+\ell b) a -p} = w_1 w_2, \quad v_2 = w_2 
\end{equation}
the defining equation becomes 
\begin{equation}
  \begin{array} {l}
\rho_{01}^{-\alpha(n+\ell b) a -p} \rho_{02}^{ -\beta (m+ka) b -q} - w_1 +  \rho_{01}^{-\alpha(n+\ell b) a -p} \rho_{02}^{\beta (m+ka) b +q}  \nu^2  + \rho_{02}^{-\beta (m+ka) b - q} w_1 w_2  \\ + \rho_{01}^{-\alpha(n+\ell b) a -p}w_2 \nu^2 +2w_1 w_2^2 \nu^2 + \rho_{02}^{\beta (m+ka) b +q} w_1 w_2 \nu^2  + \rho_{01}^{\alpha (n+\ell b) a +p} w_1^2 w_2^3 \nu^2 
   \end{array}
\end{equation}

First fix $(k, \ell) = (0,0)$ and let $ \varrho = (m,n, p,q) $ vary, we get $2ab$ singular lines $\{ L_{0, n,p},  L_{0, m,q}  \}$, then blow up these $2ab$ lines, define $A_{0}$ as the strict transform of $\tilde{B}_3$. We start over this process again, now let $|k| \leq 1, |\ell| \leq 1 $ and $ \varrho $ vary, blow up these $6ab$ singular lines $\bigcup_{ |k| \leq 1, |\ell| \leq 1} \{L_{k, l,\varrho} \}$, define $A_{1}$ as the strict transform of $\tilde{B}_3$. In general, we have $A_{i}$ being the strict transform of $\tilde{B}_3$ after blowing up $2(2i+1)ab$ singular lines $\bigcup_{ |k| \leq i, |\ell| \leq i} \{L_{k, l,\varrho} \}$ for $|k| \leq i, |\ell| \leq i $ and all $ \varrho $.

Obviously, the projection maps $\pi_{ij}: A_i \rightarrow A_j$ are surjective for any $  i \geq j \geq 0$. In other words, we have a projective system of ind-varieties since each $A_i$ is an ind-variety derived from the blowup $\sigma:  A_i \rightarrow \tilde{B}_3$ and $\tilde{B}_3$ being an ind-variety. Hence, the projective limit $ A = \varprojlim A_i $ is an ind-pro-variety in Pro$_{\aleph_0} $Ind$_{\aleph_0}(Var)$.  We have the following commutative diagram
$$
\xymatrix{
&A \ar[dl]_{\pi_i} \ar[rd]^{\pi_j}\\
A_i \ar[rd]_{\sigma} \ar@{>>}[rr]^{\pi_{ij}} & &A_j \ar[ld]^{\sigma}  \\
&\tilde{B}_3 }
$$
and it is easy to see our construction satisfies the Cartesian squares. 
\endproof

We call $A$ the compactification of the lifted Bloch ind-variety  $ \tilde{B} $ and denote it by $ \bar{\tilde{B}} $ from now on.  

\medskip

What this result shows is that we can {\em compatibly} resolve the compactification and
singularities problem on each individual finite approximation to the Bloch ind-variety, in
a way that allows us to pass to the limit.  Now, as a consequence of this construction, we
find that the space we obtained as the double limit of this compatible family of
algebraic varieties is indeed a Cantor-like space, but one that has a good approximation
by algebraic varieties.  In fact, Kapranov showed 
in \cite{KV} (see also \cite{P}) 
that the category of locally compact Hausdorff totally disconnected spaces can be identified 
with a full subcategory of IndPro$(Set_0 )$, which means that geometric spaces obtained
by the kind of double limit procedure described above look Cantor-like. 
One can find a sketch of the proof in \cite{P}. In particular, this means that 
the compactification of the Bloch variety of the Harper operator with irrational parameters 
is an ind-pro-variety, which is a Cantor-like geometric space, as one might have 
expected by thinking of it as the complex energy-crystal momentum dispersion relation,
in a case where the band structure in the spectrum is replaced by a Cantor set. The
important additional information that the result above gives us is the fact that this
Cantor-like geometry admits a good approximation by algebraic varieties: this will
be useful in \S \ref{DenSec1} and the rest of the paper below, since it will allow us to
reduce the calculation of the density of states for this Cantor-like geometry to a sequence
of terms that can be computed compatibly over the approximating algebraic varieties,
and that can therefore be identified explicitly with period integrals.

\medskip

We now rephrase the previous result on the Bloch varieties in terms of Fermi curves,
since these will be the curves over which the period calculation of \S \ref{DenSec1}
for the density of states will take place.

We have the analog of Theorem 4.2 of \cite{GKT}. The map $\pi \circ c: \tilde{B}
\rightarrow \mathbb{C} $ extends to a morphism $\tilde{\pi}:
\bar{\tilde{B}} \rightarrow \mathbb{P}^1$ and its fibers
$\tilde{F}_\lambda := \tilde{\pi}^{-1}(\lambda)$ are called the lifted Fermi
curves.  The inclusion map $i: \tilde{B} \hookrightarrow
 \mathbb{C}^* \times \mathbb{C}^* \times \mathbb{C}$ gives rise to a morphism
$\tilde{i}: \bar{\tilde{B}} \rightarrow
\mathbb{P}^1 \times \mathbb{P}^1 \times \mathbb{P}^1$.

\begin{thm}\label{FermitildeB}
 $\bar{\tilde{B}}
\setminus \tilde{B}$ consists of countable curves\\
(i)  sections of $\tilde{\pi}$, $\Sigma_{k,\ell,\varrho}^{s,t}$  with $\tilde{i}(\Sigma_{k,\ell,\varrho}^{s,t}) =  \{ s \}
\times \{ t \}  \times \mathbb{P}^1$ \\
(ii) quadrics $Q_{k,\ell,\varrho}^{s,t}$ with
$\tilde{i}(Q_{k,\ell,\varrho}^{s,t}) = \{ s \} \times \{
t \}  \times  \{ \infty \}$\\
(iii)  $H_{\ell,n,p}^s$ with
$\tilde{i}(H_{\ell,n,p}^s) =   \{ s \} \times
\mathbb{P}^1 \times \{ \infty \}$
and  $H_{k,m,q}^t$ with $\tilde{i}(H_{k,m,q}^t) = \mathbb{P}^1 \times
\{ t \}  \times \{ \infty \} $

These curves meet transversally at only one point, except that $\{ \tilde{d}^{s,t} \}$ are ordinary double points of $\bar{\tilde{B}}$, every intersection point is a nonsingular point of the compactification $\bar{\tilde{B}}$.
\end{thm}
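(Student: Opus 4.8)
The plan is to prove this the same way \cite{GKT} prove their Theorem 4.2, but carried out \emph{component-wise} on the finite approximations $A_i$ of the ind-pro-variety $\bar{\tilde B} = A = \varprojlim_i A_i$ built in the previous proposition, and then passed to the double limit. For a fixed stage $i$ one works only with the finitely many components indexed by $|k|\le i$, $|\ell|\le i$ and all $\varrho$; on this finite union the entire argument is a direct transcription of \cite{GKT}, the sole difference being the extra labels $(k,\ell,\varrho)$ that must be carried along. The content of the theorem is then that these finite-stage boundary descriptions are mutually compatible and glue, in the limit, to the stated countable picture.

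First I would assemble the boundary curves from the blowup data of \S\ref{HarperAGsec5}. The projective closure $\tilde B_1\subset\mathbb{P}^1\times\mathbb{P}^1\times\mathbb{P}^1$ meets the boundary in the eight rational curves $\{s\}\times\{t\}\times\mathbb{P}^1$, $\{s\}\times\mathbb{P}^1\times\{\infty\}$ and $\mathbb{P}^1\times\{t\}\times\{\infty\}$, with $s,t\in\{0,\infty\}$. Following the chain $\tilde B_1\leftarrow\tilde B_2\leftarrow\tilde B_3\leftarrow A_i$, each blowup replaces its center by an exceptional divisor and passes to a strict transform, and tracking these produces exactly the three families (i)--(iii). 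The quadrics $Q_{k,\ell,\varrho}^{s,t}$ are the ones already located inside the exceptional divisor $P^{s,t}$ in Lemma \ref{lem}, so $\tilde i$ contracts each of them to the single point $O_{s,t}=(s,t,\infty)$. The sections $\Sigma_{k,\ell,\varrho}^{s,t}$ are the strict transforms of $\{s\}\times\{t\}\times\mathbb{P}^1$, which the successive blowups separate into countably many branches indexed by $(k,\ell,\varrho)$; this is precisely where our count departs from \cite{GKT}, in which the single curve splits into only $ab$ branches. The remaining curves $H_{\ell,n,p}^s$ and $H_{k,m,q}^t$ are the components lying over $\{s\}\times\mathbb{P}^1\times\{\infty\}$ and $\mathbb{P}^1\times\{t\}\times\{\infty\}$, produced when the singular lines $L_{\ell,n,p}$ and $L_{k,m,q}$ of Lemma \ref{lem} are blown up.

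Next I would check that $\pi\circ c$ extends to a morphism $\tilde\pi:\bar{\tilde B}\to\mathbb{P}^1$ and classify the boundary curves accordingly. In the coordinate $\mu=\lambda^{-1}$ used near $\lambda=\infty$ in Lemma \ref{lem}, the projection to the $\lambda$-line is regular on every $A_i$, hence on the limit. A curve of type (i) maps isomorphically onto $\{s\}\times\{t\}\times\mathbb{P}^1$, so its composite with $\tilde\pi$ is an isomorphism onto $\mathbb{P}^1$ and it is a section; the quadrics (ii) and the curves (iii) all lie in the locus $\mu=0$, i.e.\ in the single fiber $\tilde F_\infty$ over $\lambda=\infty$. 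The incidences among boundary curves therefore occur in this fiber together with the points where the sections meet it, and there the local normal forms computed in \S\ref{HarperAGsec5} show that after the blowups distinct branches are transversal and meet at a single point. The only exceptions are the points $\tilde d^{s,t}$ arising from the Bezout intersection of two distinct quadrics $Q_{k,\ell,\varrho}^{s,t}$ and $Q_{k',\ell',\varrho'}^{s,t}$, which were already shown, in the remark preceding the statement, to be ordinary double points of $\bar{\tilde B}$; at every other intersection point $\bar{\tilde B}$ is nonsingular.

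The main obstacle is not any individual local computation, each of which is a mild variant of \cite{GKT}, but rather controlling the full incidence pattern of the countably many quadrics and lines and ensuring that the ordinary-double-point versus transversality dichotomy survives the double limit. Because infinitely many components pass through each exceptional divisor $P^{s,t}$, one must verify that the blowups performed at stage $A_i$ (resolving the lines $L_{k,\ell,\varrho}$ with $|k|,|\ell|\le i$) are compatible with the surjections $\pi_{ij}:A_i\to A_j$ and do not reintroduce singularities in the strict transform when one passes from $A_i$ to $A_{i+1}$. The Cartesian condition built into the definition of the ind-pro system is exactly what guarantees this: it forces the finite-stage boundary descriptions to be restrictions of one another, so that they glue to the asserted description of $\bar{\tilde B}\setminus\tilde B$ in the limit $\varprojlim_i\varinjlim_n$.
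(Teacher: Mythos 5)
Your proposal follows essentially the same route as the paper, which proves the theorem by assembling the blowup data of \S\ref{HarperAGsec5} exactly as you do: the quadrics $Q_{k,\ell,\varrho}^{s,t}$ and singular lines $L_{\ell,n,p}$, $L_{k,m,q}$ from Lemma \ref{lem}, the ordinary double points $\tilde d^{s,t}$ from the Bezout remark, and the stage-wise blowups $A_i$ with their Cartesian compatibility from the preceding proposition, all transcribed from Theorem 4.2 of \cite{GKT} with the extra labels $(k,\ell,\varrho)$. Your identification of the three families of boundary curves, the extension of $\tilde\pi$ via $\mu=\lambda^{-1}$, and the passage to the double limit match the paper's (largely implicit) argument point for point.
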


Taking the quotient by the structure group $ \mu_a \times \mu_b $, we get the compactification $\bar{B}  = \bar{\tilde{B}} / \mu_a \times \mu_b $ of the Bloch ind-variety $B$. At the same time, we get two morphisms, the inclusion $i : \bar{B} \rightarrow \mathbb{P}^1 \times \mathbb{P}^1 \times \mathbb{P}^1 $ and the projection $\pi: \bar{B} \rightarrow \mathbb{P}^1$, whose fibers are the so-call compactified Fermi curves $F_\lambda$. Let $\Sigma_{k,\ell,m,n}^{s,t}$, $Q_{k,\ell,m,n}^{s,t}$, $H_{\ell,n}^s$ and  $H_{k,m}^t$ in $\bar{B}$ be the image under the
quotient map, in addition, let $d^{s,t}$ be the image of $\tilde{d}^{s,t}$ in $\bar{B}$.

\begin{thm}
$\bar{B} \setminus B$ is the union
of countable curves,\\
(i)  $\Sigma_{k,\ell,m,n}^{s,t}$ with $i(\Sigma_{k,\ell,m,n}^{s,t}) =
 \{ s \} \times \{ t \}  \times \mathbb{P}^1$, nonsingular points of $\bar{B}$\\
(ii)  $Q_{k,\ell,m,n}^{s,t}$ with ordinary double
points at $d^{s,t}$ and nonsingular at all points of $Q^{s,t}_{k,\ell,m,n} \setminus \bigcup \{ d^{s,t} \}$ \\
(iii)  $H_{\ell,n}^s$ with $i(H_{\ell,n}^s) = \{ s \} \times \mathbb{P}^1 \times \{ \infty \} $ and  $H_{k,m}^t$  with $i(H_{k,m}^t) =  \mathbb{P}^1 \times \{ t \} \times \{\infty \} $, nonsingular points of $\bar{B}$. 
\end{thm}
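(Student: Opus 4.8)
The plan is to \emph{descend} Theorem~\ref{FermitildeB} through the quotient map $q\colon \bar{\tilde B}\to \bar B=\bar{\tilde B}/(\mu_a\times\mu_b)$, which extends the covering $c\colon \tilde B\to B$. Since a point at infinity can only map to a point at infinity, one has $q^{-1}(B)=\tilde B$; combined with the surjectivity of $q$ this gives the set equality $q(\bar{\tilde B}\setminus\tilde B)=\bar B\setminus B$. Thus the statement reduces to (a) computing the image under $q$ of each of the three families of boundary curves of Theorem~\ref{FermitildeB}, and (b) controlling how the local analytic type of their intersection points is affected by $q$.

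For (a), the input is the action recorded after Corollary~\ref{tildeBkl}: writing $\varrho=((m,n),(\rho_{01}^{p},\rho_{02}^{q}))$, an element $(\rho_{01}^{p'},\rho_{02}^{q'})\in\mu_a\times\mu_b$ scales $(z_1,z_2)\mapsto(\rho_{01}^{p'}z_1,\rho_{02}^{q'}z_2)$ and simultaneously translates $(p,q)\mapsto(p-p',q-q')$, fixing $(k,\ell)$ and $(m,n)$. Hence, for fixed $(k,\ell,m,n)$, the group permutes the $ab$ sheets carrying $\Sigma_{k,\ell,\varrho}^{s,t}$ and $Q_{k,\ell,\varrho}^{s,t}$ \emph{simply transitively} in $(p,q)$, so $q$ collapses the pair $(p,q)$ and yields the single curves $\Sigma_{k,\ell,m,n}^{s,t}$ and $Q_{k,\ell,m,n}^{s,t}$ over $\{s\}\times\{t\}\times\mathbb{P}^1$ and $\{s\}\times\{t\}\times\{\infty\}$. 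For the remaining families only one factor is relevant: $\mu_a$ permutes the $H_{\ell,n,p}^{s}$ simply transitively in $p$, while $\mu_b$ fixes each one setwise and acts on its $\mathbb{P}^1$ by scaling the $z_2$–coordinate; since $\mathbb{P}^1/(\text{cyclic scaling})\cong\mathbb{P}^1$, the image is a single rational curve $H_{\ell,n}^{s}$ over $\{s\}\times\mathbb{P}^1\times\{\infty\}$, and symmetrically the image of $H_{k,m,q}^{t}$ is $H_{k,m}^{t}$.

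For (b), the key point is that distinct components are genuinely moved, so the action is \emph{free} away from the corner points $w_0^{s,t},w_1^{s,t},w_2^{s,t}$: any element with $p'\neq0$ (resp.\ $q'\neq0$) sends $\Sigma,Q$ and $H_{\ell,n,p}^{s}$ (resp.\ $H_{k,m,q}^{t}$) to a different sheet, while an element with $p'=0$ acts on $H_{\ell,n,p}^{s}$ only through the $z_2$–scaling, whose fixed points are the two corners. In particular each Bezout point $\tilde d^{s,t}$, which lies on a definite \emph{pair} of quadrics $Q_{k,\ell,\varrho}^{s,t},Q_{k',\ell',\varrho'}^{s,t}$ with $(\ell,n,p)\neq(\ell',n',p')$ and $(k,m,q)\neq(k',m',q')$, has trivial stabilizer: a nonidentity element fixing $\tilde d^{s,t}$ would have to permute this pair, and since $a,b$ are odd no nonidentity translation of $(p,q)$ can swap two indices. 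Hence $q$ is a local analytic isomorphism near every intersection point other than the $w_i^{s,t}$. Transversal smooth crossings of $\bar{\tilde B}$ therefore descend to transversal smooth crossings, giving the nonsingularity asserted in (i) and (iii); and the ordinary double points $\tilde d^{s,t}$ descend to ordinary double points $d^{s,t}:=q(\tilde d^{s,t})$, with $Q_{k,\ell,m,n}^{s,t}$ smooth off $\bigcup\{d^{s,t}\}$, which is (ii).

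The hard part is the local analysis at the corner points $w_0^{s,t},w_1^{s,t},w_2^{s,t}$, where the action has its fixed points and $q$ is genuinely ramified, so the étale descent above does not apply. Here one must show that the linearized $\mu_a\times\mu_b$–action on the (smooth, by Theorem~\ref{FermitildeB}) space $\bar{\tilde B}$ at each $w_i^{s,t}$ is generated by pseudo-reflections along the boundary divisors, so that by Chevalley--Shephard--Todd the quotient surface stays smooth along the descended curves and no new quotient singularity is created on $\Sigma_{k,\ell,m,n}^{s,t}$, $Q_{k,\ell,m,n}^{s,t}$ or $H_{\ell,n}^{s},H_{k,m}^{t}$. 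Because this verification is purely local and is carried out one sheet at a time, it is compatible with both the inductive assembling of components and the projective system of blowups of \S\ref{HarperAGsec5}; one checks it on each approximating variety $A_i$ and passes to the ind-pro limit, exactly as in the finite configuration of \cite{GKT}.
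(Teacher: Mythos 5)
Your proposal is correct and takes essentially the same route as the paper, which states this theorem without any separate proof, treating it as the direct descent of Theorem~\ref{FermitildeB} through the quotient map $\bar{\tilde{B}} \rightarrow \bar{B} = \bar{\tilde{B}}/\mu_a\times\mu_b$ (mirroring \S 4 of \cite{GKT}) exactly as you do. Your write-up in fact records more justification than the paper does --- the simply transitive $(p,q)$-translation on sheets, the trivial stabilizer at $\tilde{d}^{s,t}$ via the oddness of $a$ and $b$, and the Chevalley--Shephard--Todd criterion at the fixed points $w_i^{s,t}$ --- with the sole caveat that the pseudo-reflection property at the corners is stated as the required check rather than verified, a detail the paper itself also leaves implicit.
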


\subsection{Density of states and periods}\label{DenSec1}

In the above subsections we have constructed the desired geometric space, the compactification of the Bloch ind-variety, which can be approximated by finite dimensional algebraic varieties. In this subsection, we will proceed to compute the density of states on those approximating components for the Harper operator. The density of states is indeed the period integral over Fermi curves, which also related to the periods of elliptic curves. %

In this subsection, we assume $|\xi_1|=|\xi_2| = 1$, i.e. $(\xi_1, \xi_2) = (e^{2\pi i k_1}, e^{2\pi i k_2}) $ for some $ (k_1,k_2) \in (0,1]^2$,  then the Bloch ind-variety is given by
\begin{equation}
  \begin{array}{rl}
B = \{ (e^{2\pi i k_1}, e^{2\pi i k_2}, \lambda)
 ~|~  & H \psi =\lambda \psi,  \\
      & \psi(m +a,n) = e^{2 \pi i k_1} \psi(m,n), \\
      & \psi(m ,n +b) = e^{2\pi i k_2} \psi(m,n) \}
 \end{array}  
\end{equation}
Denote its spectrum by $ \sigma(H) := \{ E_j(k_1, k_2 ), j \in \mathbb{N} \}$, the function $ E_j(k_1, k_2 )$ is the so-called  $j$-th band function. 

In order to compute the density of states, we get back to a continuous model by taking the limit of the lattice model. Here again we use the same notation and terminology as $\S3$ and $\S11$ of \cite{GKT}.

Let $H_n$ denote the Harper operator $H$ acting on $\ell^2(\mathbb{Z}^2/an\mathbb{Z} \oplus bn \mathbb{Z} )$ for some integer $n \geq 1$, later we will take the limit as $n$ tends to infinity. It is easy to see that the
eigenvalues of $H_n$ are just given by
$$
\{ E_j(\frac{m_1}{n} , \frac{m_2}{n} ) |  1 \leq m_1, m_2 \leq n, ~ j \geq 1 ~\}
$$

To define the integrated density of states, we first count the number of eigenvalues less than or equal to $\lambda$. Or equivalently, with the step function $ \Theta(x) $, define
\begin{equation}
\nu_n(\lambda) := \sum_{j=1}^{\infty} \sum_{m_1,m_2 = 1}^n \Theta(\lambda - E_j(\frac{m_1}{n} , \frac{m_2}{n} ))
\end{equation}

The integrated density of states is then defined as the limit 
\begin{equation}
\rho(\lambda)  : =  \lim_{n \to \infty}\frac{1}{abn^2} \nu_n(\lambda) 
\end{equation}
and the density of states is defined as the derivative $d \rho / d \lambda $. In the literature, the integrated density of states can also be defined as the normalized trace on a $II_1$ factor in von Neumann algebra theory \cite{Sh}.

Rewrite it as an integration, we get an integral over the continuous variable $ k \in [0, 1]^2$
\begin{equation}
 \begin{array}{l @{\quad} l}
\rho(\lambda)  &  =  \lim_{n \to \infty}\frac{1}{ab} \sum_{j=1}^{\infty} \frac{1}{n^2}\sum_{m_1,m_2 = 1}^n \Theta(\lambda - E_j(\frac{m_1}{n} , \frac{m_2}{n} ))\\
              &  =  \frac{1}{ ab} \sum_{j=1}^{\infty} \int_{I^2} \Theta(\lambda - E_j(k))dk,  
 \end{array}
\end{equation}
\begin{lemma}
The density of states can be expressed as $\frac{ d \rho }{ d \lambda } = \frac{1}{ ab}  \int_{\lambda \in \sigma(H)} \omega_\lambda$, where $\omega_\lambda $ is a differential $1$-form.
\end{lemma}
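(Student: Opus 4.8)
The plan is to differentiate the integrated density of states term by term and to recognize the measure that appears on each level set of a band function $E_j$ as the Gelfand--Leray form (equivalently, the Poincar\'e residue) on the corresponding component of the Fermi curve $F_\lambda$.

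First I would differentiate under the sum and the integral, using that $\frac{d}{d\lambda}\Theta(\lambda-E_j(k))=\delta(\lambda-E_j(k))$ in the distributional sense, to get
$$
\frac{d\rho}{d\lambda}=\frac{1}{ab}\sum_{j=1}^\infty\int_{I^2}\delta(\lambda-E_j(k))\,dk .
$$
The interchange of $d/d\lambda$ with the infinite band sum and the integral needs justification: I would first carry out the whole computation on the finite approximations $H_n$ of \S\ref{DenSec1}, where both the band sum and the $k$-integral are finite, and only afterwards pass to the limit, invoking the compatible approximation by algebraic varieties constructed in \S\ref{HarperAGsec5}.

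Next, for each band $E_j$ I would apply the coarea formula to convert the delta-integral into an integral over the level curve $\{k\in I^2 : E_j(k)=\lambda\}$,
$$
\int_{I^2}\delta(\lambda-E_j(k))\,dk=\int_{\{E_j=\lambda\}}\frac{ds}{|\nabla E_j|},
$$
with $ds$ the arc length. The right-hand side is exactly the period of the Gelfand--Leray $1$-form $\omega_\lambda^{(j)}$, characterized on the level set by the relation $dk_1\wedge dk_2 = dE_j\wedge\omega_\lambda^{(j)}$, i.e. $\omega_\lambda^{(j)}=dk_2/(\partial E_j/\partial k_1)=-dk_1/(\partial E_j/\partial k_2)$. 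In terms of the coordinates $\xi_j=e^{2\pi i k_j}$ and the defining polynomials $P_{k,\ell}$ of the Bloch surface, this is the Poincar\'e residue along $F_\lambda$ of the meromorphic $2$-form $dk_1\wedge dk_2/(E_j-\lambda)$, so it extends holomorphically to the resolved, compactified Fermi curve produced in \S\ref{HarperAGsec5}. Summing over $j$ and noting that $\bigcup_j\{E_j=\lambda\}$ is precisely the real locus $\{|\xi_1|=|\xi_2|=1\}$ of $F_\lambda$, the forms $\omega_\lambda^{(j)}$ patch to a single $1$-form $\omega_\lambda$ on $F_\lambda$, which yields the asserted formula $\frac{d\rho}{d\lambda}=\frac{1}{ab}\int_{\lambda\in\sigma(H)}\omega_\lambda$.

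The hard part is twofold. At the analytic level, besides the interchange above one must handle the critical points of the band functions, where $\nabla E_j=0$: these are the singular fibers at $\lambda\in\{0,\pm 4\}$ isolated in \S\ref{HarperAGsec3}, together with the band edges, at which $1/|\nabla E_j|$ is singular as an arc-length density (the van Hove singularities). It is exactly the blowups of \S\ref{HarperAGsec5} that make the period of $\omega_\lambda$ well defined across these loci. At the geometric level the more delicate point is that $F_\lambda$ is an ind-variety with countably many components, so $\omega_\lambda$ and its period must be defined as a limit over the approximating varieties $A_i$ of the ind-pro system; the compatibility of the Gelfand--Leray residue construction with the projection maps $\pi_{ij}$ is what guarantees that this limit exists and reproduces the density of states. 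I expect this last compatibility to be the crux of the argument.
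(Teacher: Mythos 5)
Your proposal takes essentially the same route as the paper: differentiate $\rho(\lambda)$ to produce $\delta(\lambda-E_j(k))$, apply the coarea formula to obtain $\int_{\{E_j=\lambda\}} ds/|\nabla_k E_j|$, and rewrite this as the period of the $1$-form $\omega_\lambda = dk_2/|\partial_1 E_j| = dk_1/|\partial_2 E_j|$ on each level set. The paper simply carries out this computation directly, keeping absolute values so that $\omega_\lambda$ is a positive density (the signed Gelfand--Leray version appears only later as $\Omega_\lambda$ in the spectral-functions subsection), and it defers the identification with Fermi-curve components and the compactification/ind-pro issues you discuss to the subsequent lemma, so your extra remarks are consistent elaborations rather than a different argument.
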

\begin{proof}

\begin{equation*}
  \begin{array}{l @{\quad} l}
\frac{ d \rho  }{ d \lambda }&  =  \frac{1}{ ab} \Sigma_{j=1}^{\infty} \int_{ I^2} \delta(\lambda - E_j(k) ) dk\\
                                &  =  \frac{1}{ ab} \Sigma_{j=1}^{\infty} \int_{ E_j(k) = \lambda } \frac{ ds }{ | \nabla_k E_j | }\\
                                 &  =  \frac{1}{ ab} \Sigma_{j=1}^{\infty} \int_{ E_j(k) = \lambda } \frac{\sqrt{dk_1^2 + dk_2^2 }}
                                 {\sqrt{\partial_1 E_j^2 + \partial_2 E_j^2 }}
  \end{array}
\end{equation*}

For fixed $\lambda $, from $E_j(k_1,k_2) = \lambda$, we have $dE_j = \frac{\partial E_j }{\partial k_1}dk_1 + \frac{\partial E_j }{\partial k_2} dk_2 = 0$, 
\begin{equation*}
dk_1 = -\frac{\partial_2E_j}{\partial_1E_j} dk_2 \quad \mbox{or} \quad dk_2 = -\frac{\partial_1E_j}{\partial_2E_j} dk_1
\end{equation*}
Then 
\begin{equation*}
ds = \sqrt{\partial_1 E_j^2 + \partial_2 E_j^2}~ \frac{dk_1}{|\partial_2 E_j|} \quad \mbox{or} \quad  ds = \sqrt{\partial_1 E_j^2 + \partial_2 E_j^2} ~ \frac{dk_2}{|\partial_1 E_j|}
\end{equation*}
Therefore we can write the density of states as
$$
\frac{ d \rho }{ d \lambda } 
      = \frac{1}{ ab} \sum_{j=1}^{\infty} \int_{ E_j(k) = \lambda } \frac{dk_2}{ |\partial_1 E_j| }
      = \frac{1}{ ab} \sum_{j=1}^{\infty} \int_{ E_j(k) = \lambda }  \frac{dk_1}{ |\partial_2 E_j| }
      = \frac{1}{ ab} \int_{ \lambda \in \cup_j \{ E_j(k) \}}  \frac{dk_1}{ |\partial_2 E_j| }
$$
Denote the differential $1$-form by $\omega_\lambda$, i.e. $\omega_\lambda(k_1,k_2) : = \frac{dk_2}{ |\partial_1 E_j| } = \frac{dk_1}{ |\partial_2 E_j| } $, the lemma is proved.
\end{proof}

Let $P(\xi_1, \xi_2, \lambda )$ be a general polynomial with $ \xi_1 = e^{2\pi ik_1} $ and
$ \xi_2 = e^{2\pi ik_2} $. Then $ dk_1 = d \xi_1/2\pi i\xi_1  $ and $ dk_2 = d \xi_2/2\pi i\xi_2  $, we also have
$d\lambda = \partial_1E_j dk_1 + \partial_2E_j dk_2 $ from $\lambda = E_j(k_1,k_2)$. Plug into $dP = P_{\xi_1}d\xi_1 + P_{\xi_2}d\xi_2 + P_\lambda d\lambda = 0$, we get
\begin{equation}
(2\pi i P_{\xi_1}\xi_1 + P_\lambda \partial_1E_j)~ dk_1  +  (2\pi i P_{\xi_2}\xi_2 + P_\lambda \partial_2E_j)~ dk_2 = 0
\end{equation}
Since $dk_1$ and $dk_2$ are independent, so
\begin{equation}
 \left\{ \begin{array}{l @{\quad} l}
2\pi i P_{\xi_1}\xi_1 + P_\lambda \partial_1E_j = 0 \\
2\pi i P_{\xi_2}\xi_2 + P_\lambda \partial_2E_j = 0
\end{array} \right.
\end{equation}

\begin{lemma}
The density of states is a period integral over Fermi curves.
\end{lemma}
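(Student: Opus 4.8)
The plan is to solve the linear system in $\partial_1 E_j,\partial_2 E_j$ just obtained from $dP=0$, substitute back into the formula $\omega_\lambda = dk_2/|\partial_1 E_j| = dk_1/|\partial_2 E_j|$ of the previous lemma, and thereby recognize $\omega_\lambda$ as the real restriction of a Poincar\'e-residue type holomorphic $1$-form on the Fermi curve $F_\lambda=\{P(\xi_1,\xi_2,\lambda)=0\}$. The integral defining $d\rho/d\lambda$ is then, by definition, a period of this form over the real cycle cut out by the spectrum.

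First I would observe that, for each component, the defining polynomial has the shape $P(\xi_1,\xi_2,\lambda)=(\text{Laurent polynomial in }\xi_1,\xi_2)-\lambda$, as in \eqref{FC} and \eqref{tildeBklrho}, so that $P_\lambda\equiv -1$. Feeding this into the system
\begin{equation*}
2\pi i\,\xi_1 P_{\xi_1}+P_\lambda\,\partial_1 E_j=0,\qquad 2\pi i\,\xi_2 P_{\xi_2}+P_\lambda\,\partial_2 E_j=0
\end{equation*}
gives at once $\partial_1 E_j=2\pi i\,\xi_1 P_{\xi_1}$ and $\partial_2 E_j=2\pi i\,\xi_2 P_{\xi_2}$. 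Substituting these together with $dk_1=d\xi_1/(2\pi i\,\xi_1)$ and $dk_2=d\xi_2/(2\pi i\,\xi_2)$ yields
\begin{equation*}
\omega_\lambda=\frac{1}{4\pi^2}\left|\frac{d\xi_1}{\xi_1\xi_2\,P_{\xi_2}}\right|=\frac{1}{4\pi^2}\left|\frac{d\xi_2}{\xi_1\xi_2\,P_{\xi_1}}\right|,
\end{equation*}
the two expressions agreeing because $P_{\xi_1}\,d\xi_1+P_{\xi_2}\,d\xi_2=0$ along $F_\lambda$ (where $\lambda$ is fixed). Hence, up to the constant $1/4\pi^2$ and the absolute value coming from the real structure, $\omega_\lambda$ is the restriction to the real locus of the holomorphic $1$-form $\eta_\lambda=d\xi_1/(\xi_1\xi_2 P_{\xi_2})=-d\xi_2/(\xi_1\xi_2 P_{\xi_1})$ on $F_\lambda$.

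It then remains to justify calling the integral a period. The spectrum locus $\{E_j(k_1,k_2)=\lambda\}$ over which we integrate is exactly the real locus of the complex Fermi curve and defines a $1$-cycle on it (a union of cycles over the countably many components $F_\lambda^{k,\ell,m,n}$); integrating $\eta_\lambda$ over such a cycle is a period. I would check that $\eta_\lambda$ is holomorphic on the affine curve away from the common zeros of $P_{\xi_1}$ and $P_{\xi_2}$, using the alternate expression to handle the ramification points of each coordinate projection, so that genuine poles occur only at the singular points and along the curves at infinity. These are precisely the loci resolved by the compactification $\bar B$ and the blowups of \S\ref{HarperAGsec5}, namely the ordinary double points $d^{s,t}$ and the boundary curves listed in Theorem \ref{FermitildeB}. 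Invoking the analog of the GKT analysis on the resolved curve, $\eta_\lambda$ extends to a holomorphic differential (respectively one with controlled residues) on each compactified component, so that the integral over the real cycle is a well-defined period, and the sum $\sum_j$ is simply the sum of these periods over the components of the Fermi ind-variety.

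The main obstacle is not the algebra --- clearing $P_\lambda=-1$ and recognizing the residue form is routine --- but verifying that $\eta_\lambda$ has only the singularities compatible with defining a period on the desingularized, compactified Fermi curve, and that this holds \emph{compatibly} across the infinitely many mutually intersecting components. It is exactly here that the ind-pro construction of \S\ref{HarperAGsec1}--\S\ref{HarperAGsec5} is needed: it guarantees that the compactification, resolution and hence the component-wise period computation can be carried out consistently and assembled in the double limit, which is what makes the overall sum a legitimate period integral over the Fermi ind-pro-variety.
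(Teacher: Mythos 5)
Your proposal is correct and takes essentially the same route as the paper's proof: you use $P_\lambda \equiv -1$ to solve the system for $\partial_1 E_j = 2\pi i\,\xi_1 P_{\xi_1}$ and $\partial_2 E_j = 2\pi i\,\xi_2 P_{\xi_2}$, substitute $dk_i = d\xi_i/(2\pi i\,\xi_i)$ to obtain the identical expression $\omega_\lambda = \frac{d\xi_1}{4\pi^2|\xi_1\xi_2 P_{\xi_2}|} = \frac{d\xi_2}{4\pi^2|\xi_1\xi_2 P_{\xi_1}|}$, and reidentify the level sets $\{E_j(k)=\lambda\}$ with the components $F_\lambda^{k,\ell,m,n}$ of the Fermi ind-variety, exactly as the paper does. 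Your additional verification that $\omega_\lambda$ is the real restriction of a residue-type holomorphic form extending to the compactified, resolved components is more than the paper's proof itself supplies (the paper relegates this to surrounding remarks invoking the ind-pro construction), and only strengthens the argument.
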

\begin{proof}
Recall that the Bloch ind-variety $B = \cup_{(k,\ell)\in \mathbb{Z}^2}\cup_{m,n=1}^{a,b}B_{m,n}^{(k,\ell)}$, where
\begin{equation}
B_{m,n}^{(k,\ell)} = \{ e^{2 \pi i \alpha (n+\ell b)}\xi_1 +  e^{-2 \pi i \alpha (n+\ell b)}\xi_1^{-1} + e^{2 \pi i \beta (m+ka)}\xi_2 +  e^{-2 \pi i \beta (m+ka)}\xi_2^{-1} - \lambda = 0 \}
\end{equation}
and Fermi curves $F_\lambda = \cup_{(k,\ell)\in \mathbb{Z}^2}\cup_{m,n=1}^{a,b}F_\lambda^{k,\ell,m,n}$, where
\begin{equation}
F_\lambda^{k,\ell,m,n} = \{ e^{2 \pi i \alpha (n+\ell b)}\xi_1 +  e^{-2 \pi i \alpha (n+\ell b)}\xi_1^{-1} + e^{2 \pi i \beta (m+ka)}\xi_2 +  e^{-2 \pi i \beta (m+ka)}\xi_2^{-1} = \lambda \}
\end{equation}
$P^{k,\ell,m,n} :=  e^{2 \pi i \alpha (n+\ell b)}\xi_1 +  e^{-2 \pi i \alpha (n+\ell b)}\xi_1^{-1} + e^{2 \pi i \beta (m+ka)}\xi_2 +  e^{-2 \pi i \beta (m+ka)}\xi_2^{-1} - \lambda  $ denote the polynomial, then
\begin{equation}
 \left\{ \begin{array}{l @{\quad} l}
P^{k,\ell,m,n}_\lambda  = -1 \\
P^{k,\ell,m,n}_{\xi_1} = e^{2 \pi i \alpha (n+\ell b)} -  e^{-2 \pi i \alpha (n+\ell b)}\xi_1^{-2} \\
P^{k,\ell,m,n}_{\xi_2} = e^{2 \pi i \beta (m+ka)} -  e^{-2 \pi i \beta (m+ka)}\xi_2^{-2}
\end{array} \right.
\end{equation}
It follows that on each component $F_\lambda^{k,\ell,m,n}$
\begin{equation}
 \left\{ \begin{array}{l @{\quad} l}
\partial_1E_j  = 2\pi i P^{k,\ell,m,n}_{\xi_1} \xi_1 \\
\partial_2E_j  = 2\pi i P^{k,\ell,m,n}_{\xi_2} \xi_2
\end{array} \right.
\end{equation}
Furthermore
\begin{equation}
 \left\{ \begin{array}{ll}
\frac{dk_2}{\partial_1 E_j}  =  \frac{d \xi_2/ 2\pi i \xi_2}{2\pi i P^{k,\ell,m,n}_{\xi_1} \xi_1} =  \frac{1}{(2\pi i)^2} \frac{d \xi_2 }{ \xi_1 \xi_2 P^{k,\ell,m,n}_{\xi_1}} \\
 \frac{dk_1}{ \partial_2 E_j } = \frac{ d \xi_1/2\pi i\xi_1}{ 2\pi i P^{k,\ell,m,n}_{\xi_2} \xi_2 } = \frac{1}{(2\pi i)^2} \frac{d \xi_1 }{ \xi_1 \xi_2 P^{k,\ell,m,n}_{\xi_2}}
\end{array} \right.
\end{equation}
i.e. $\omega_\lambda (\xi_1,\xi_2)= \frac{d \xi_1 }{ 4 \pi^2  |\xi_1 \xi_2 P^{k,\ell,m,n}_{\xi_2} |} =  \frac{d \xi_2 }{4 \pi^2  |\xi_1 \xi_2 P^{k,\ell,m,n}_{\xi_1}|} $. 
By the change of variables $ \xi_1 = e^{2\pi ik_1}, \xi_2 = e^{2\pi ik_2}$, the graph $(k_1,k_2, E_j(k_1,k_2))$ is changed into $(\xi_1, \xi_2, \lambda)$. Then
$$
\frac{ d \rho }{ d \lambda }  =  \frac{1}{ ab} \sum_{j=1}^{\infty} \int_{ E_j(k) = \lambda } \omega_\lambda(k_1,k_2)\\
                              =  \frac{1}{ ab} \sum_{(k, \ell) \in \mathbb{Z}^2}\sum_{m,n=1}^{a,b} \int_{F_\lambda^{k,\ell,m,n}} \omega_\lambda(\xi_1,\xi_2)\\
                              =  \frac{1}{ ab} \int_{F_\lambda} \omega_\lambda
$$
\end{proof}

Notice how in this result we used explicitly the fact that the geometric space describing
the complex energy-crystal momentum relation for the Harper operator with irrational
parameters admits a good approximation by a family of algebraic varieties, as we proved
in \S \ref{HarperAGsec1}--\S \ref{HarperAGsec5}. In particular, we see here that the density
of states is computed as a period integral with compatible contributions from 
each of the components $F_\lambda^{k,\ell,m,n}$ of the Fermi curve, as described in
Theorem \ref{FermitildeB} above.
\medskip

Recall then that the incomplete elliptic integral of the first kind is defined as
\begin{equation}
F(x,k) : = \int_0^x  \frac{dt} {\sqrt{(1  - t^2 )( 1 - k^2t^2)} }
\end{equation}
and the complete elliptic integral of the first kind is defined as
\begin{equation}
K(k) : = F(1,k) = \int_0^1   \frac{dt} {\sqrt{(1  - t^2 )( 1 - k^2t^2)} }
\end{equation}
The complete elliptic integral of the first kind is also called the quarter period.

We rewrite $\omega_\lambda$  on $F_\lambda^{k,\ell,m,n} \cap \{|\xi_1| = |\xi_2|=1 \} $ as
\begin{equation}
\omega_\lambda  =  \frac{ d\xi_1 }{ 4 \pi^2 |\xi_1 \xi_2 (e^{2 \pi i \beta (m+ka)} -  e^{-2 \pi i \beta (m+ka)}\xi_2^{-2} )| }
               = \frac{ dk_1}{4 \pi |sin2\pi (k_2+\beta (m+ka))|}
\end{equation}
Let $C_{m,n}^{(k,\ell)}: = \{ (k_1, k_2) | cos2\pi (k_1+\alpha (n+\ell b)) + cos2\pi (k_2+\beta (m+ka)) = \lambda/2 \} $ so that
\begin{equation}
\frac{ d \rho }{ d \lambda }   
=  \frac{1}{4 \pi ab} \sum_{k,\ell=-\infty}^\infty \sum_{m,n=1}^{a,b}\int_{C_{m,n}^{(k,\ell)}} \frac{ dk_1}{|sin2\pi (k_2+\beta (m+ka))|} 
\end{equation}

\begin{thm}
If we define the elliptic modulus as $k := \frac{4-|\lambda|}{4+|\lambda|}$, then the density of states as a function of $k$ is
$\frac{ d \rho }{ d \lambda }|_{F_\lambda^{k,\ell,m,n}} =  \frac{1}{2\pi^2 ab}(1+k) K(k) =  \frac{1 }{2\pi^2 ab} K(\frac{2\sqrt{k}}{1+k})$ on each component of the Fermi curve.
\end{thm}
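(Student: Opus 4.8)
The plan is to evaluate the contribution of a single Fermi curve component by eliminating one of the two coordinates, reducing the resulting one-variable integral to a complete elliptic integral with four real branch points, and finally matching the two asserted closed forms through a Landen transformation. Throughout I may assume $\lambda \ge 0$ and recover the general case by replacing $\lambda$ with $|\lambda|$, using the electron-hole symmetry noted above.

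First I would start from the single-component expression
\[
\frac{d\rho}{d\lambda}\Big|_{F_\lambda^{k,\ell,m,n}} = \frac{1}{4\pi ab}\int_{C_{m,n}^{(k,\ell)}} \frac{dk_1}{|\sin 2\pi(k_2 + \beta(m+ka))|}
\]
obtained just before the statement, and introduce the shifted angular variables $u = k_1 + \alpha(n+\ell b)$ and $v = k_2 + \beta(m+ka)$. Since the shifts are constants, $dk_1 = du$ and the defining equation of $C_{m,n}^{(k,\ell)}$ becomes $\cos 2\pi u + \cos 2\pi v = \lambda/2$, which no longer involves $(k,\ell,m,n)$, $\alpha$, or $\beta$; this already accounts for the parameter-independence asserted in the theorem. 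Using the curve equation to write $\cos 2\pi v = \tfrac{\lambda}{2} - \cos 2\pi u$, hence $|\sin 2\pi v| = \sqrt{1 - (\tfrac{\lambda}{2} - \cos 2\pi u)^2}$, converts the line integral into an integral in the single variable $u$.

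Next I would substitute $w = \cos 2\pi u$, which turns the $u$-integral into an elliptic integral with the four real branch points $-1 < \tfrac{\lambda}{2}-1 < 1 < \tfrac{\lambda}{2}+1$. Keeping track of the fact that the real locus $C_{m,n}^{(k,\ell)}$ double-covers the allowed arc $\{\cos 2\pi u \ge \tfrac{\lambda}{2}-1\}$ (two $v$-branches over an even $u$-interval), the full line integral $\int_{C_{m,n}^{(k,\ell)}} dk_1/|\sin 2\pi v|$ should reduce to
\[
\frac{2}{\pi}\int_{\frac{\lambda}{2}-1}^{1}\frac{dw}{\sqrt{(1-w)(1+w)(1-\tfrac{\lambda}{2}+w)(1+\tfrac{\lambda}{2}-w)}},
\]
the inner integral being taken between the two middle roots. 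Applying the standard reduction of $\int_b^c dw/\sqrt{(w-a)(w-b)(c-w)(d-w)}$ to $\tfrac{2}{\sqrt{(c-a)(d-b)}}\,K$, here the prefactor equals $1$ and the modulus is $\kappa = \frac{\sqrt{16-\lambda^2}}{4}$, so the inner integral is $K(\kappa)$ and $\frac{d\rho}{d\lambda}\big|_{F} = \frac{1}{4\pi ab}\cdot\frac{2}{\pi}K(\kappa) = \frac{1}{2\pi^2 ab}K(\kappa)$.

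It remains to identify $\kappa$ with the quantities in the statement. A direct computation shows $\kappa = \frac{2\sqrt{k}}{1+k}$ for $k = \frac{4-\lambda}{4+\lambda}$, and then the Landen transformation $K\!\left(\frac{2\sqrt{k}}{1+k}\right) = (1+k)K(k)$ yields both asserted forms at once. The main obstacle I anticipate is not any single step but the careful bookkeeping of constants in the reduction — the multiplicity of the projection to the $u$-axis, the limits of integration, and the correct ordering and pairing of the four branch points — since an error in any of these propagates into both the prefactor and the modulus. Invoking the elliptic reduction formula and the Landen identity with the right normalization is the delicate point, whereas the change of variables and the parameter-independence are routine.
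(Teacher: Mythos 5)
Your proposal is correct, and it follows the paper's route up to the last step: both arguments start from $\frac{1}{4\pi ab}\int_{C_{m,n}^{(k,\ell)}}\frac{dk_1}{|\sin 2\pi(k_2+\beta(m+ka))|}$, use the curve equation to eliminate the $k_2$-variable, and substitute the cosine to land on the same one-variable complete elliptic integral over $[\lambda/2-1,\,1]$. Where you genuinely diverge is in the evaluation and in the bookkeeping. The paper symmetrizes by the shift $x = x_1 - \lambda/4$, rescales $t = x/(1-\lambda/4)$, and reads off the Legendre form $\frac{1+k}{\pi}K(k)$ with $k = \frac{4-\lambda}{4+\lambda}$ directly, running the two signs of $\lambda$ as separate parallel computations and only then quoting the identity $K\bigl(\frac{2\sqrt{k}}{1+k}\bigr)=(1+k)K(k)$; you instead apply the general four-branch-point reduction $\int_b^c \frac{dw}{\sqrt{(w-a)(w-b)(c-w)(d-w)}} = \frac{2}{\sqrt{(c-a)(d-b)}}K(\kappa)$ with $\kappa^2 = \frac{(c-b)(d-a)}{(c-a)(d-b)}$, which for the roots $-1 < \lambda/2-1 < 1 < \lambda/2+1$ indeed gives prefactor $1$ and $\kappa = \frac{\sqrt{16-\lambda^2}}{4} = \frac{2\sqrt{k}}{1+k}$, and then recover $(1+k)K(k)$ via the ascending Landen transformation, handling $\lambda<0$ by electron-hole symmetry rather than by a second computation. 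One point where your version is actually sharper than the paper's: the multiplicity count you flagged as the delicate step --- a factor $2$ for the two $k_2$-branches and another factor $2$ from the evenness of the $u$-arc, i.e. $\int_C = 4\int_0^{u_0}$ --- is precisely what yields the stated constant $\frac{1}{2\pi^2 ab}$. The paper's written proof invokes only one doubling (``twice the integral over a half period''), arrives at $I = \frac{1+k}{\pi}K(k)$, and combined with the prefactor $\frac{1}{4\pi ab}$ this would give $\frac{1+k}{4\pi^2 ab}K(k)$; the missing factor of $2$ is silently restored in the theorem's (correct) final assertion, and your accounting resolves that discrepancy explicitly. As an independent check, your modulus reproduces the classical square-lattice density of states $\frac{1}{2\pi^2}K\bigl(\sqrt{1-(\lambda/4)^2}\bigr)$, since $\sqrt{1-(\lambda/4)^2} = \frac{\sqrt{16-\lambda^2}}{4}$.
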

\begin{proof}

Denote the definite integral $I_{m,n}^{(k,\ell)}(\lambda) :  = \int_{C_{m,n}^{(k,\ell)}} \frac{ dk_1}{|sin2\pi (k_2+\beta (m+ka))|}$ for $k_1 \in I$ moving on the curve $C_{m,n}^{(k,\ell)}$. 

The differential form $\omega_\lambda$ is well defined, and $k_1 \neq -\alpha (n+\ell b), -\alpha (n+\ell b) \pm 1/2~ (mod ~2\pi)$, we can just assume $ 0 < -\alpha (n+\ell b) < 1/2$. 

Let $k_1$ run through a half period of the sine function, i.e. $ 2\pi (k_1+\alpha (n+\ell b)) \in (0, \pi)$, then $I_{m,n}^{(k,\ell)}(\lambda)$ is twice of this integral over a half period. 
\begin{equation*}
\begin{array}{rl}
  I_{m,n}^{(k,\ell)}(\lambda)  &  = 2 \int_0^{1/2} \frac{ d(k_1+\alpha (n+\ell b))}{\sqrt{1 - ( \lambda/2 - cos2\pi (k_1+\alpha (n+\ell b)))^2}} \\
          &  = 2 \int_0^{1/2} \frac{ dcos2\pi (k_1+\alpha (n+\ell b))}{\sqrt{1 - ( \lambda/2 - cos2\pi (k_1+\alpha (n+\ell b)))^2}( -2\pi sin2\pi (k_1+\alpha (n+\ell b) )) }\\
&  = \frac{-1}{\pi} \int_0^{1/2} \frac{dcos2\pi(k_1+\alpha (n+\ell b))} { \sqrt{1 - ( \lambda/2 - cos2\pi (k_1+\alpha (n+\ell b)))^2} \sqrt{ 1 - ( cos2\pi(k_1+\alpha (n+\ell b)) )^2 } } \\
 &  = \frac{1}{\pi} \int^1_{-1} \frac{dx_1} { \sqrt{1 - ( \lambda/2 - x_1)^2} \sqrt{ 1 - x_1^2 } }
\end{array}
\end{equation*}
in the last line we set $x_1 : = cos2\pi(k_1+\alpha (n+\ell b))$, here $-1 \leq x_1 \leq 1$ and  $-1 \leq x_1 - \lambda/2 \leq 1$, there are two cases:  

$ ~(i) ~\lambda/2 -1 \leq x_1  \leq 1$ when $\lambda \in ( 0, 4 )$; 

$~(ii)~ -1 \leq x_1  \leq 1 + \lambda/2$ when $\lambda \in ( -4, 0 )$.

In the first case, $\lambda \in ( 0, 4) $
\begin{equation*}
\begin{array}{rl}
I_{m,n}^{(k,\ell)}(\lambda)  &  = \frac{1}{\pi} \int^1_{\lambda/2 -1} \frac{dx_1} { \sqrt{1 - ( \lambda/2 - x_1)^2} \sqrt{ 1 - x_1^2 } } \\
 ~~x : = x_1 - \lambda/4 &  = \frac{1}{\pi} \int^{1 - \lambda/4}_{  \lambda/4 -1}  \frac{dx} { \sqrt{1 - ( \lambda/4 - x)^2} \sqrt{ 1 - (x + \lambda/4 )^2 } } \\
&  = \frac{1}{\pi} \int^{1 - \lambda/4}_{  \lambda/4 - 1}  \frac{dx} { \sqrt{((1 - \lambda/4)^2 - x^2 )((1 + \lambda/4)^2 - x^2)} } \\
&  = \frac{1}{\pi} \int^{1 - \lambda/4}_{  \lambda/4 -1}  \frac{dx} { (1 - \lambda/4)(1 + \lambda/4) \sqrt{(1  - x^2/(1 - \lambda/4)^2 )( 1 - x^2/(1 + \lambda/4)^2)} } \\
~~ t : = x/(1 - \lambda/4) &  = \frac{4}{(4+\lambda )\pi} \int^1_{- 1 } \frac{dt} {\sqrt{(1 - t^2 )( 1 - k^2t^2)} } \\
~~ k := ( 4 - \lambda )  / ( 4 + \lambda ) &  = \frac{1+k}{\pi}  K(k)  
\end{array}
\end{equation*}

In the second case, $\lambda \in (-4, 0 ) $
\begin{equation*}
\begin{array}{rl }
I_{m,n}^{(k,\ell)} (\lambda) 
&  = \frac{1}{\pi} \int^{1+ \lambda/2}_{-1} \frac{dx_1} { \sqrt{1 - ( \lambda/2 - x_1)^2} \sqrt{ 1 - x_1^2 } } \\
~~x : = x_1 - \lambda/4 &  = \frac{1}{\pi} \int^{1 + \lambda/4}_{ -1- \lambda/4 }  \frac{dx} { \sqrt{1 - ( \lambda/4 - x)^2} \sqrt{ 1 - (x + \lambda/4 )^2 } } \\
~~t : = x/(1 + \lambda/4) &  =\frac{4}{(4-\lambda )\pi} \int^1_{ -1}  \frac{dt} { \sqrt{(1  - t^2 )( 1 - k'^2t^2)} } \\
k' := ( 4 + \lambda )  / ( 4 - \lambda ) &  = \frac{1+k'}{\pi}  K(k')
\end{array}
\end{equation*} 

By identifying $k=k'$, namely, defining $k := \frac{4-|\lambda|}{4+|\lambda|}$,  we conclude that $I_{m,n}^{(k,l)}(\lambda) = \frac{1+ k }{\pi} K(k)$ and the density of states on each component $F_\lambda^{k,\ell,m,n}$ turns out to be
$$
 \frac{1+k }{2\pi^2 ab} K(k) =  \frac{1 }{2\pi^2 ab}  K(\frac{2\sqrt{k}}{1+k})
$$
\end{proof}

Interchanging positive and negative $\lambda$ corresponds to the electron-hole symmetry in solid state physics. We give a representation of the density of states in terms of elliptic curves.

\begin{cor}
On each component of the Fermi curve the density of states is a sum of two half-periods of isomorphic elliptic curves up to a constant, namely
$\frac{ d \rho }{ d \lambda }|_{F_\lambda^{k,\ell,m,n}}  =  \frac{1}{8\pi^2 ab} (\int_\gamma \omega_k + \int_{\gamma'} \omega_{1/k})$. 
\end{cor}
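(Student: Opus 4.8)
The plan is to start from the closed form established in the preceding theorem, namely that on each component $\frac{d\rho}{d\lambda}\big|_{F_\lambda^{k,\ell,m,n}} = \frac{1+k}{2\pi^2 ab}K(k)$ with $k = \frac{4-|\lambda|}{4+|\lambda|}$, and to reinterpret the elementary decomposition $(1+k)K(k) = K(k) + k\,K(k)$ as a sum of two periods. The two summands are meant to carry the two sign cases of the theorem: case (i) with $\lambda\in(0,4)$ produced the modulus $k$, while case (ii) with $\lambda\in(-4,0)$ produced $k'=\frac{4+\lambda}{4-\lambda}=1/k$, so the symmetric splitting is precisely the algebro-geometric shadow of the electron--hole symmetry $\lambda\mapsto-\lambda$ noted above the statement. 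First I would fix the Legendre-form elliptic curve $E_k\colon w^2=(1-x^2)(1-k^2x^2)$ with holomorphic differential $\omega_k=dx/w$, so that $K(k)=\int_0^1\omega_k$ and hence $\int_\gamma\omega_k=2K(k)$, where $\gamma$ is the real cycle between the branch points $x=-1$ and $x=1$; by definition this is a half-period of $E_k$.

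Next I would treat the term $k\,K(k)$ using the companion curve $E_{1/k}\colon w^2=(1-x^2)(1-k^{-2}x^2)$, whose finite real branch points are $\pm1$ and $\pm k$ (recall $0<k<1$) and whose holomorphic differential is $\omega_{1/k}=dx/w$. The key computation is the substitution $x=kt$, under which $\int_0^k\omega_{1/k}=\int_0^1 k\,dt\big/\sqrt{(1-k^2t^2)(1-t^2)}=k\,K(k)$, so that $\int_{\gamma'}\omega_{1/k}=2k\,K(k)$ with $\gamma'$ the real cycle between $x=-k$ and $x=k$, again a half-period. Adding the two contributions gives $\int_\gamma\omega_k+\int_{\gamma'}\omega_{1/k}=2(1+k)K(k)$, which reproduces the density of states up to the stated prefactor $\frac{1}{8\pi^2 ab}$; tracking this constant down from the theorem is pure bookkeeping and I would simply absorb it.

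It then remains to justify the word \emph{isomorphic}, i.e.\ that $E_k\cong E_{1/k}$ over $\mathbb{C}$. The cleanest route is to observe that the replacement $k\mapsto 1/k$ sends the modular parameter $\Lambda=k^2$ to $1/\Lambda$, which is one of the six anharmonic transformations leaving the $j$-invariant $j=256\,(1-\Lambda+\Lambda^2)^3/[\Lambda^2(1-\Lambda)^2]$ fixed; equivalently one checks directly that the ratio is invariant under $\Lambda\mapsto1/\Lambda$. Alternatively, and more explicitly, I would exhibit the same coordinate change $x\mapsto kx$ as a linear map carrying the branch set $\{\pm1,\pm1/k\}$ of $E_k$ onto $\{\pm1,\pm k\}$ of $E_{1/k}$, which realizes the isomorphism and simultaneously matches $\omega_k$ with a constant multiple of $\omega_{1/k}$. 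The one genuine subtlety, and the step I expect to be the main obstacle, is that $1/k>1$ means $K(1/k)$ is \emph{not} itself a real period — the honest real elliptic curve in play is $E_{1/k}$ reached through the substitution, not a naive ``modulus greater than one'' integral — and it is exactly the reciprocal-modulus substitution $x=kt$ that repairs this and produces a bona fide half-period $2k\,K(k)$. Everything else reduces to assembling the two half-periods and matching the normalization inherited from the theorem.
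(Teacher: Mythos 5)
Your route is in substance the paper's own: the same additive splitting $(1+k)K(k)=K(k)+kK(k)$, the same reciprocal-modulus substitution $x=kt$, and the same pair of curves --- your Legendre models $w^2=(1-x^2)(1-k^2x^2)$ and $w^2=(1-x^2)(1-k^{-2}x^2)$ are, up to rescaling $w$, the paper's $E_{1/k}$ and $E_k$ (quartics $y^2=(t^2-1)(t^2-1/k^2)$ and $y^2=(t^2-1)(t^2-k^2)$), so your swapped labels are harmless since both curves enter the sum symmetrically. The only methodological divergence is the isomorphism $E_k\cong E_{1/k}$: you invoke invariance of the $j$-invariant under $\Lambda\mapsto 1/\Lambda$ (or the explicit linear map $x\mapsto kx$), whereas the paper frames both curves and verifies equality of the period ratios $\tau$ directly from $\int_{-1}^1\omega_{1/k}=k\int_{-k}^k\omega_k$ and $\int_1^{1/k}\omega_{1/k}=k\int_k^1\omega_k$. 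Both arguments are valid, and yours is arguably the more standard one.

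However, there is a genuine quantitative gap that you cannot ``simply absorb'': with your normalizations $\int_\gamma\omega_k=2K(k)$ and $\int_{\gamma'}\omega_{1/k}=2kK(k)$, the right-hand side of the corollary evaluates to
$\frac{1}{8\pi^2 ab}\cdot 2(1+k)K(k)=\frac{(1+k)K(k)}{4\pi^2 ab}$,
which is exactly half of the value $\frac{(1+k)K(k)}{2\pi^2 ab}$ established in the theorem you start from. Since the constant $\frac{1}{8\pi^2 ab}$ is part of the statement being proved, this factor of $2$ is not bookkeeping; it signals that $\gamma$ and $\gamma'$ must be taken as the closed cycles encircling the cuts (out along one sheet of the double cover and back along the other), on which the two legs \emph{add} rather than cancel, because both the orientation of $dx$ and the sign of $w$ flip on the return branch. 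This is precisely what the paper records as $\int_\gamma\omega_k=2\int_{-k}^k\omega_k$ and $\int_{\gamma'}\omega_{1/k}=2\int_{-1}^1\omega_{1/k}$, giving $\int_\gamma\omega_k=4K(k)$ and $\int_{\gamma'}\omega_{1/k}=4kK(k)$, whence $\frac{1}{8\pi^2 ab}\bigl(4K(k)+4kK(k)\bigr)=\frac{(1+k)K(k)}{2\pi^2 ab}$ as required. Your quantities $2K(k)$ and $2kK(k)$ are indeed the ``half-periods'' of the corollary's wording (half of the real periods $4K(k)$ and $4kK(k)$), but as you define them they are path integrals between branch points rather than integrals over homology cycles, and the displayed identity is then false by a factor of $2$; the fix above repairs it.
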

\begin{proof}
First we rewrite the density of states on each component as 
\begin{equation*}
\begin{array}{rl}
\frac{ d \rho }{ d \lambda }|_{F_\lambda^{k,\ell,m,n}}  & = \frac{1}{4\pi^2 ab} \int_{-1}^1 \frac{dt} {\sqrt{(1  - t^2 )( 1 - k^2t^2)} } +
                \frac{k}{4\pi^2 ab} \int_{-1}^1 \frac{dt} {\sqrt{(1  - t^2 )( 1 - k^2t^2)} }\\
u := kt      & = \frac{1}{4\pi^2 ab} \int_{-k}^k \frac{du} {\sqrt{(k^2  - u^2 )( 1 - u^2)} } +
                \frac{1}{4\pi^2 ab} \int_{-1}^1 \frac{dt} {\sqrt{(1  - t^2 )( 1 - k^2t^2)/k^2} }\\ 
            & = \frac{1}{4\pi^2 ab} \int_{-k}^k \frac{dt} {\sqrt{( t^2 - 1)(t^2  - k^2 )} } +
                \frac{1}{4\pi^2 ab} \int_{-1}^1 \frac{dt} {\sqrt{(t^2 - 1 )( t^2 -1/k^2)} }\\       
\end{array}
\end{equation*}

For the elliptic modulus $k \in (0,1)$, consider the elliptic curves given by
$$
\begin{array}{l @{\quad} l}
E_k~~ := \{ (t,y) | y^2 = ( t^2 - 1)(t^2  - k^2 ) \}\\
E_{1/k} := \{ (t,y) | y^2 = ( t^2 - 1)(t^2  - 1/k^2 ) \} 
\end{array}
$$
If we denote the canonical holomorphic form $\omega= dt/y$ by $\omega_k$ for $E_k$ (resp. $\omega_{1/k}$ for $E_{1/k}$), then
$$
\frac{d\rho }{d\lambda }|_{F_\lambda^{k,\ell,m,n}}  = \frac{1}{4\pi^2 ab} (\int_{-k}^k \omega_k+ \int_{-1}^1 \omega_{1/k})
$$

Recall that for $E_k$, we cut the Riemann sphere from $k$ to $1$ and from $-1$ to $-k$, then assemble a Riemann surface homeomorphic to the torus $\mathbb{T}^2$. Similarly for  $E_{1/k}$, we cut the Riemann sphere from $1$ to $1/k$ and from $-1/k$ to $-1$, also assemble a Riemann surface homeomorphic to the torus $\mathbb{T}^2$. 

A framed elliptic curve $(E, \delta, \gamma)$ is an elliptic curve with an integral basis for the first homology such that the intersection number $\delta \cdot \gamma = 1$. And for the period vector $(\int_\delta \omega, \int_\gamma \omega)$, the ratio $\tau(E,\delta, \gamma) = \int_\delta \omega / \int_\gamma \omega $ is an invariant for isomorphic complex tori, which is called the modulus of the isomorphism class.

On $E_k$, we can choose $\delta$ as the cycle from $k$ to $1$ then from $1$ back to $k$ and $\gamma$ from $-k$ to $k$ then back to $-k$, which makes $(E_k, \delta, \gamma)$ a framed elliptic curve. Similarly for $E_{1/k}$, we choose $\delta'$ as the cycle from $1$ to $1/k$ then back to $1$ and $\gamma'$ from $-1$ to $1$ then back to $-1$.

By deforming the path of integration, $\int_\delta \omega_k = \int_k^1 \omega_k + \int^k_1 \omega_k = 2 \int_k^1 \omega_k$ and $\int_\gamma \omega_k = \int_{-k}^k \omega_k + \int_k^{-k} \omega_k = 2 \int_{-k}^k \omega_k$. 
Then 
$$
\begin{array}{l @{\quad} l}
\tau(E_k) = \frac{\int_\delta \omega_k}{\int_\gamma \omega_k}  = \frac{\int_k^1 \omega_k}{\int_{-k}^k \omega_k}\\
\tau(E_\frac{1}{k}) = \frac{\int_{\delta'} \omega_{1/k}}{\int_{\gamma'} \omega_{1/k}}  = \frac{\int_1^{1/k} \omega_{1/k}}{\int_{-1}^1 \omega_{1/k}}
\end{array}
$$

It is easy to see that $\tau(E_k) = \tau(E_{1/k})$, indeed $ \int_{-1}^1 \omega_{1/k} = k \int_{-k}^k \omega_k$ and $ \int_1^\frac{1}{k} \omega_{1/k} = k \int_k^1 \omega_k $. In other words, we have elliptic curves $E_k$ isomorphic to $E_{1/k}$ and we can interchange between them by rescaling the fundamental domains. 

Hence on each component of the Fermi curve
$$
\frac{ d \rho }{ d \lambda }|_{F_\lambda^{k,\ell,m,n}} 
            =  \frac{1}{8\pi^2ab} (\int_\gamma \omega_k + \int_{\gamma'} \omega_{1/k})    
            =  \frac{1}{8\pi^2\tau ab} (\int_\delta \omega_k + \int_{\delta'} \omega_{1/k})
$$
\end{proof}

\subsection{Spectral functions}\label{SpFunc1sec}

With the density of states in hand, we can continue to derive some interesting spectral functions. In this subsection, we will calculate the partition function of the propagating electron in magnetic field on the components of the Bloch ind-variety.

Fix $(k,\ell,m,n)$, we consider one irreducible component $B_{m,n}^{(k,\ell)}$, other components can be treated similarly.
Recall that 
\begin{equation}
B_{m,n}^{(k,\ell)} = \{e^{2\pi i \alpha (n+\ell b) }\xi_1 + e^{-2\pi i\alpha (n+\ell b) } \xi_1^{-1} + e^{2\pi i \beta  (m+ka)}\xi_2 + e^{-2\pi i \beta (m+ka) } \xi_2^{-1} - \lambda = 0 \}
\end{equation}
For the moment, we omit the superscripts and denote the above defining polynomial as $P( \xi_1, \xi_2, \lambda)$.

From $dP = P_{\xi_1}d\xi_1 + P_{\xi_2}d\xi_2 + P_\lambda d\lambda = 0$, then on $B_{m,n}^{(k,\ell)}$ the pullback 
$$
\pi^*(d\lambda) = - \frac{P_{\xi_1}d\xi_1}{P_\lambda} - \frac{P_{\xi_2}d\xi_2}{P_\lambda} = P_{\xi_1}d\xi_1 + P_{\xi_2}d\xi_2 	
$$
If we wedge this form with 
$$
\Omega_\lambda = \frac{1}{ (2\pi i)^2 }\frac{d \xi_1 }{ \xi_1 \xi_2 P_{\xi_2}}    
$$
then
\begin{equation}
\Omega_\lambda \wedge \pi^*(d\lambda) = \frac{1}{ (2\pi i)^2 } \frac{d \xi_1  d \xi_2  }{  \xi_1 \xi_2 }                             
\end{equation}

This observation was already made in $ \S 11 $ of \cite{GKT}. Note that $\Omega_\lambda$ is slightly different from $\omega_\lambda$. As defined before, $\omega_\lambda$ is a volume form so it should be positive and we finally got a positive period. By contrast, as for  $\Omega_\lambda$, we get rid of the absolute value and take the orientation into account. 

Define $\tilde{\Omega} : =  \Omega_\lambda \wedge \pi^*d\lambda$ over $B_{m,n}^{(k,\ell)}$ such that $\tilde{\Omega}|_{F^{k,\ell,m,n}_\lambda} = \Omega_\lambda$, where $F^{k,\ell,m,n}_\lambda$ is the component of the Fermi curve of $B_{m,n}^{(k,\ell)}$, $\tilde{\Omega}$ is called the density of states form and then $\Omega_\lambda$  is the relative differential form with respect to $\pi^*d\lambda$. 
$$ 
\begin{array}{l @{\quad} l} 
\int_{B_{m,n}^{(k,\ell)}}  |\tilde{\Omega}| &  = \int_{B_{m,n}^{(k,\ell)}} |\Omega_\lambda \wedge \pi^*d\lambda|  \\
     &    = \int_{0 < |\lambda| < 4} d\lambda \int_{F^{k,\ell,m,n}_\lambda} \omega_\lambda \\
     &    = \frac{1}{4\pi^2 ab} \int_{0 < |\lambda| < 4} d\lambda (1+k)K(k)  \\
     &    = \frac{1}{2\pi^2 ab}  \int_0^1 (1+k)K(k) \frac{8}{(1+k)^2} dk \\
     &    =  \frac{4}{\pi^2 ab} \int_0^1 \frac{K(k)}{1+k } dk  \\
     &    = \frac{1}{2ab}                        
\end{array}
$$

Then it is possible to construct some interesting spectral functions based on $d\rho$ for each component. 
Let us look at the zeta function of the Harper operator on $B_{m,n}^{(k,\ell)}$, 
\begin{equation}
\zeta_H^{k,\ell,m,n}(s):=\int_{0 < |\lambda| < 4} \lambda^s d\rho =  \int_{B_{m,n}^{(k,\ell)}}  \lambda^s |\tilde{\Omega}|            
\end{equation}

\begin{equation} 
\begin{array}{l @{\quad} l} 
\int_{0 < |\lambda| < 4} \lambda^s d\rho & = \int_0^4   \lambda^s d\lambda \int_{F^{k,\ell,m,n}_\lambda} \omega_\lambda +\int_{-4}^0  \lambda^s d\lambda \int_{F^{k,\ell,m,n}_\lambda} \omega_\lambda\\
     &    = \frac{2^{2s+1}}{\pi^2 ab} \int_0^1 (\frac{1-k}{1+k})^s \frac{K(k)}{1+k} dk + \frac{2^{2s+1}}{\pi^2 ab} \int_0^1 (\frac{k-1}{1+k})^s \frac{K(k)}{1+k} dk             
\end{array}
\end{equation}

In particular, when $s = 2k+1$ is an odd integer, $\zeta_H^{k,\ell,m,n}(2k+1) = 0$ and when $s = 2k$ is even, 
$\zeta_H^{k,\ell,m,n}(2k) = \frac{4^{2k+1}}{\pi^2 ab} \int_0^1 (\frac{1-k}{1+k})^{2k} \frac{K(k)}{1+k} dk $.

There is another way to compute the zeta function using two dimensional residue theorem. 
Denote 
$$
W(\xi_1, \xi_2):= e^{2\pi i \alpha(n + \ell b)}\xi_1 + e^{-2\pi i \alpha(n +\ell b)  } \xi_1^{-1} + e^{ 2\pi i \beta( m +ka)  }\xi_2 + e^{-2\pi i \beta( m +ka)  } \xi_2^{-1}
$$ 
namely, $P(\xi_1, \xi_2, \lambda) =   W(\xi_1, \xi_2) - \lambda$. Since for any function $f(\xi_1, \xi_2, \lambda)$,
\begin{tabbing}
$\qquad \qquad \qquad \qquad \int_{\mathbb{T}^2, 0< |\lambda| < 4} f(\xi_1, \xi_2, \lambda) \delta(\lambda - W(\xi_1, \xi_2)) d\xi_1 d\xi_2 d\lambda$ \\
$\qquad \qquad \qquad \qquad = \int_{\mathbb{T}^2, 0< |\lambda| < 4, \lambda = W} f(\xi_1, \xi_2, \lambda) \frac{d\sigma}{|\nabla P|}$ \\
$ \qquad \qquad \qquad \qquad = \int_{\mathbb{T}^2, 0< |\lambda| < 4, \lambda = W} f(\xi_1, \xi_2, \lambda) \frac{d\xi_1d\xi_2 + d\xi_1d\lambda + d\xi_2d\lambda}{\sqrt{1+P^2_{\xi_1}+ P^2_{\xi_2}}}$ \\
$\qquad \qquad \qquad \qquad = \int_{B_{m,n}^{(k,\ell)}} f(\xi_1, \xi_2, \lambda) \frac{1+P_{\xi_1}+ P_{\xi_2}}{\sqrt{1+P^2_{\xi_1}+ P^2_{\xi_2}}} d\xi_1d\xi_2$ 
\end{tabbing}
Therefore 
\begin{tabbing}
$ \qquad \qquad \int_{B_{m,n}^{(k,\ell)}} \lambda^s  \frac{d\xi_1d\xi_2}{4\pi^2 \xi_1 \xi_2}$ 
$= \int_{\mathbb{T}^2, 0< |\lambda| < 4} \lambda^s  \delta(\lambda - W)  \frac{\sqrt{1+P^2_{\xi_1}+ P^2_{\xi_2}}}{1+P_{\xi_1}+ P_{\xi_2}} \frac{d\xi_1d\xi_2 d\lambda}{4\pi^2 \xi_1 \xi_2}$  \\
$ \qquad \qquad \qquad \qquad ~~~~~~= \int_{\mathbb{T}^2} W^s  \frac{\sqrt{1+P^2_{\xi_1}+ P^2_{\xi_2}}}{1+P_{\xi_1}+ P_{\xi_2}} \frac{d\xi_1d\xi_2 }{4\pi^2 \xi_1 \xi_2}$ 
\end{tabbing}
By the residue theorem, only the positive integer powers survive and the contour integral only depends on terms with $1, \xi_1, \xi_2, \mbox{and} ~\xi_1\xi_2$ in the integrand.

We can also have the partition function of the Harper operator on $B_{m,n}^{(k,\ell)}$, 
$$
Z_H^{k,\ell,m,n}(t):=\int_{0 < |\lambda| < 4} e^{-t \lambda} d\rho = \frac{1}{2ab} + \sum_{k=1}^{\infty}\frac{\zeta_H^{k,\ell,m,n}(2k)}{(2k)!}t^{2k}
$$
In order to get the spectral functions on the whole variety $B$, we should collect all the contributions from countable components.

\section{Almost Mathieu Operator}\label{DenSec2}

We compare in this section the case of the density of states of the two-dimensional
Harper operator analyzed above with the analogous problem for the one-dimensional
almost Mathieu operator. In particular, we show how to recover in the one-dimensional
case the familiar picture of the Hofstadter butterfly and the corresponding 
density of states with its explicit dependence on the parameter. 

\subsection{Algebro-geometric model}

We apply the same process to the almost Mathieu operator, the big difference is that now its density of states depends on the parameter $\alpha$. Therefore the derived spectral functions will have totally different properties compared with those of the Harper operator. Here again we have the analog of $\S 3$ of \cite{GKT}.

First the Bloch variety of the almost Mathieu operator is now given by
\begin{equation}
B' := \{ (\xi, \lambda) \in \mathbb{C}^* \times \mathbb{C}
 ~|~ H' \varphi(n) =\lambda \varphi(n), \varphi(n+a) = \xi \varphi(n)\}
\end{equation}
It is easy to see that $(\xi, \lambda)$ belongs to $B'$ if and only if $\lambda$ is an eigenvalue of the matrix $M'$ with countable components $M'^{\ell}$, $\ell \in \mathbb{Z}$
\begin{equation}
M'^{\ell} = \left(
\begin{array}{ccccc}
2cos(2\pi \alpha(1+\ell a)) & 1                   & 0       & \ldots & -\xi \\
1                 & 2cos(2\pi \alpha(2+\ell a) ) & 1       & \ldots & 0\\
\vdots            & \vdots              & \ddots  & \vdots & \vdots \\
-\xi^{-1}         & 0                   &\ldots   & 1        & 2cos(2\pi \alpha (\ell +1)a)
\end{array} \right)
\end{equation}
In other words, 
\begin{equation}
B'=\{(\xi, \lambda)  \in \mathbb{C}^* \times \mathbb{C}  ~|~ det(M' - 
\lambda I) = \prod_{\ell \in \mathbb{Z}}det(M'^{\ell} - \lambda I) = 0\}
\end{equation}
In fact we can expand the determinant as $det(M'^{\ell} - \lambda I) = p_{\ell}(\lambda, \alpha) - \xi - \xi^{-1}$,  
\begin{equation}
p_{\ell}(\lambda, \alpha) := (-\lambda)^a + (2\sum_{j=1}^a cos2\pi  \alpha (j+\ell a) ) (-\lambda)^{a-1}+ O(\lambda^{a-2})
\end{equation}

Consider a continued fraction expansion $\{ c_n = \frac{p_n}{q_n} \}$ to approximate the irrational $\alpha$. Then we let the prime $a$ vary according to $\{ q_n \} $ so that the spectrum of the Bloch variety can be approximated by numerical computation of finite band structures. Indeed, based on semi-classical analysis and renormalization method, the Hofstadter butterfly is obtained by numerical approximations \cite{B}. 

Define the unramified covering and  $\tilde{B'} : = c'^{-1}(B')$ as before
$$
\begin{array}{l}
   c':\mathbb{C}^* \times  \mathbb{C} \rightarrow \mathbb{C}^* \times  \mathbb{C} \\
     ~~~~~ (z, \lambda) ~\mapsto ~(z^a, \lambda)
\end{array}
$$
The structure group $\mu_a$ of the covering $c': \tilde{B'} \rightarrow B'$ acts on the fibers as 
$$
\rho \cdot (z, \lambda) = (\rho z, \lambda)
$$
Similarly, we use Fourier transform to convert the spectrum problem into the study of the Fourier modes, then the almost Mathieu operator
is represented by the infinite matrix
\begin{equation}
\hat{M'} = Diag( \rho z +  \rho^{-1} z^{-1} +2cos2\pi \alpha (j + \ell a) ), \quad 1 \leq j \leq a, ~\ell \in \mathbb{Z} 
\end{equation}
So under this representation, the lifted Bloch variety 
\begin{equation}
\tilde{B'} = \{ (z, \lambda)
 ~| ~   \prod_{\ell \in \mathbb{Z}}\prod_{\rho \in \mu_a}\prod_{j=1}^a( \rho z +  \rho^{-1} z^{-1} +2cos(2\pi \alpha (j + \ell a) ) -\lambda) = 0 \}
\end{equation}
and the Bloch variety 
\begin{equation}
B' = \{ (\xi, \lambda)
 ~| ~ \prod_{\ell \in \mathbb{Z}} \prod_{j=1}^a ( \xi  +  \xi^{-1}  +2cos2\pi \alpha (j + \ell a)  -\lambda) = 0 \}
\end{equation}
 In this case, the Fermi curves are degenerate points
\begin{equation}
F'_\lambda = \bigcup_{\ell \in \mathbb{Z}} \bigcup_{j=1}^a\{ \xi  ~| ~   \xi  +  \xi^{-1}  +2cos2\pi \alpha (j + \ell a ) = \lambda  \}
\end{equation}

\subsection{Density of States}
Let us look at the self-adjoint boundary value problem in dimension one.  For $k \in I$, we now consider 
\begin{equation}
B' := \{ (e^{2\pi i k}, \lambda) 
 ~|~ H' \varphi(n) =\lambda \varphi(n), \varphi(n+a) = e^{2\pi i k} \varphi(n)\}
\end{equation}
Assume its band functions consist of $\{ E_i(k)\}$. 

Let $H'_n(n \geq 1)$ denote the self-adjoint operator $H'$ acting on $l^2(\mathbb{Z}/an\mathbb{Z})$.
For $k \in \frac{1}{n}\mathbb{Z}$, $E_i(k)$ is an eigenvalue of $H'_n$ and the spectrum of $H'_n$ is
$$
\{ E_i(\frac{m}{n} ) | i \geq 1, ~ 1 \leq m \leq n \}.
$$ 
Similarly we have 
$$
\nu_n(\lambda) = \sum_{i=1}^{\infty} \sum_{m= 1}^n \Theta(\lambda - E_i(\frac{m}{n}))
$$
Then the integrated density of states of the almost Mathieu operator is 
\begin{equation}
\begin{array}{l @{\quad} l}
\rho'(\lambda) &  =  \lim_{n \to \infty}\frac{1}{a} \sum_{i=1}^{\infty} \frac{1}{n}\sum_{m = 1}^n \Theta(\lambda - E_i(\frac{m}{n} ))\\
              &  =  \frac{1}{ a} \sum_{i=1}^{\infty} \int_{I} \Theta(\lambda - E_i(k))dk
\end{array}
\end{equation}
And the density of states of the almost Mathieu operator is 
\begin{equation}
\frac{ d \rho'  }{ d \lambda }=  \frac{1}{ a} \sum_{i=1}^{\infty} \int_{ I} \delta(\lambda - E_i(k) ) dk
                              =  \frac{1}{ a} \sum_{i=1}^{\infty}  \frac{ 1}{ | g_i'(k_{i}) | }
\end{equation}
where $g_i(k) := \lambda - E_i(k)$ and $k_{i}$ are the real zeros of $g_i(k)$. From the Fermi curve of the almost Mathieu operator, we have $ E_i(k) =  E_j^\ell(k) $ for some $ j, \ell$
\begin{equation}
E_j^l(k)  = 2cos 2 \pi k + 2 cos 2 \pi \alpha  (j + \ell a) = \lambda
\end{equation}
i.e. $k_{i} \in I$  satisfies the equation $2cos 2 \pi k_{i} + 2 cos 2 \pi \alpha  (j + \ell a) =\lambda $. It is easy to see that 
$$
g_i'(k) = -\frac{d}{dk}E_j^\ell(k) =  4\pi sin 2\pi k
$$
Hence
$$
\frac{d\rho'}{d\lambda} =  \frac{1}{ a} \sum_{i=1}^{\infty} \frac{ 1}{ 4\pi | sin 2\pi k_{i} | } 
                        =  \frac{1}{ 4\pi a} \sum_{\ell \in \mathbb{Z}}\sum_{j=1}^{a}  \frac{ 1}{\sqrt{1 - (\lambda/2 - cos 2 \pi \alpha (j+\ell a))^2} }
$$

Therefore the density of states of the almost Mathieu operator is a function of the parameter $\alpha$. However, as we have seen that the density of states of the Harper operator is independent of the parameters $\alpha$ and $\beta$.

In dimension two, the integral variable was absorbed by integrating over one-dimensional Fermi curves and the dependence of the parameters was resolved by the symmetric form of the magnetic translations. But for the almost Mathieu operator, the Fermi curves are degenerate points and the associated measure is just the counting measure, so the dependence of the parameter still remains, 
as expected, in agreement with the form of the density described for instance in  \cite{HKS}.

\subsection{Spectral functions}
Now we can similarly define the zeta function of the almost Mathieu operator on each component $B'_{\ell ,j}$
\begin{equation} 
\zeta_{H'}^{\ell ,j}(s) : =\int_{B'_{\ell ,j}} \lambda^s \frac{d\xi}{2\pi i \xi}   = \int_{\mathbb{T}} W_{\ell,j}'^s \frac{d\xi }{2\pi i \xi}
                      = \int_{\mathbb{T}} (\xi + \xi^{-1} + 2 cos 2 \pi \alpha (j+\ell a))^s \frac{d\xi }{2\pi i \xi}
\end{equation}

By the residue theorem, only the positive integer powers survive and the contour integral only depends on terms with $1~ \mbox{and} ~\xi$. So consider powers $ n \in \mathbb{N}$ and expand the polynomial as
$$
(\xi + \xi^{-1} + 2 cos 2 \pi \alpha (j+\ell a))^n = \sum_{k_1, k_2, k_3}C_n^{k_1, k_2, k_3}\xi^{k_1-k_2} (2 cos 2 \pi \alpha (j+\ell a))^{k_3}
$$
There are only two cases 
$$
(1)~k_1 -k_2 = 0 ~~\mbox{and} ~~k_1 + k_2 +k_3 = n~~~~
(2)~k_1 -k_2 = 1 ~~\mbox{and} ~~k_1 + k_2 +k_3 = n~~~~
$$
thus $\zeta_{H'}^{\ell,j}(n)$ can be computed as `winding numbers'
$$
\zeta_{H'}^{\ell,j}(n)  = \sum_{n-2k \geq 0} \frac{n!(2 cos 2 \pi \alpha(j+\ell a))^{n-2k}}{(k!)^2(n-2k)!} + \frac{1}{2\pi i}\sum_{n-2k-1 \geq 0} \frac{n!(2 cos 2 \pi \alpha (j+\ell a))^{n-2k-1}}{k!(k+1)!(n-2k-1)!}
$$

To get the zeta function of the whole variety $B'$, we have $\zeta_{H'}(n) =  \sum_{\ell \in \mathbb{Z}} \sum_{j=1}^a\zeta_{H'}^{\ell,j}(n)$. Finally, we also have the formal partition function of the almost Mathieu operator on $B'$
$$
Z_{H'}(t):=\int e^{-t \lambda} d\rho' = \sum_{k=0}^{\infty}\frac{\zeta_{H'}(k)}{k!}(-t)^{k}
$$

\section{Conclusions}

In the case of a discretized periodic Schr\"odinger operator describing electron
propagation in solids, the complex energy-crystal momentum dispersion relation
is described geometrically by an algebraic variety, the Bloch variety, which 
consists of the set of complex points that can be reached by analytic continuation
of the band functions. The density of states can then be computed \cite{GKT} as
a period on a curve (the Fermi curve) in the Bloch variety.

\smallskip

In this paper we consider the case with magnetic field, where the periodic Schr\"odinger
operator is replaced by the two-dimensional Harper operator, or its degeneration, the
one-dimensional almost Mathieu operator. One knows from the spectral theory of these
operators (\cite{HKS}, \cite{L1}, \cite{Sh}) that, in the case of irrational parameters,
the band structure of the spectrum is replaced by a Cantor set, giving rise to the
well known Hofstadter butterfly picture. Thus, one can see that, correspondingly,
the geometric locus describing the complex energy-crystal momentum dispersion relation
and replacing the Bloch variety will no longer be directly described by
classical algebraic geometry. However, as one approximates the Cantor-like spectrum
by a family of intervals (for example by approximating the irrational parameter by rationals
via the continued fraction algorithm), it should be possible to correspondingly ``approximate"
this geometric space by ordinary algebraic varieties. The first part of this paper consists
of a geometric result, which shows exactly what this approximation and limit procedure
consists of. In particular, we show that one can obtain the space describing the
complex energy-crystal momentum dispersion relation for the Harper operator with
irrational parameters as a ``double limit" of a family of algebraic varieties, or more
precisely as an ind-pro-variety, where one limit takes care of the presence of infinitely
many components and the other limit of the blowups that are needed to deal with the
singularities. The resulting space has indeed a Cantor-like geometry, but one that admits
a good approximation by algebraic varieties, in the sense of this double limit procedure,
so that one can still use methods from classical algebraic geometry, applied 
compatibly to the varieties in the approximating family.

\smallskip

We then use this geometric result to show that we can still obtain an explicit
calculation of the density of states for the two-dimensional Harper operator 
as a period on the Fermi curve, where the 
period integral now consists of a sequence of compatible contributions from the
components of the approximating system of varieties. 
This integral is then explicitly computed in terms of elliptic
integrals and periods of elliptic curves. Similarly, we obtain explicit formulae
for the spectral functions, again in terms of compatible contributions from
the approximating family of algebraic varieties. We apply the same technique
to the density of states and spectral functions in the case of the one-dimensional 
almost Mathieu operator. The main difference
between the two cases is that, in the two-dimensional case the dependence on
the parameter disappears in the density of states, because it is absorbed in the integration 
over the Fermi curve, while in the one-dimensional case the Fermi curves are
points with the counting measure and one recovers the density of states
obtained, by different methods, in \cite{HKS}, with its explicit dependence 
on the parameter. By a residue calculation, we also obtain the zeta function 
of the almost Mathieu operator as a sum over ``winding numbers" associated 
to the components in the approximating family of algebraic Bloch varieties.

\end{document}